\let\MYcaption\@makecaption
\let\@makecaption\MYcaption
\newtheorem{definition}{Definition}
\newtheorem{theorem}{Theorem}
\newtheorem{remark}{Remark}
\newtheorem{assumption}{Assumption}
\begin{document}

\title{\bf Distributed Adaptive Backstepping Control for Vehicular Platoons with Mismatched Disturbances Using Vector String Lyapunov Functions}

\author{Zihao Song, Shirantha Welikala, Panos J. Antsaklis and Hai Lin\thanks{This work was supported by the National Science Foundation under  Grant CNS-1830335 and Grant IIS-2007949. The authors are with the Department of Electrical Engineering, University of Notre Dame, Notre Dame, IN 46556 USA (e-mail: {\tt zsong2@nd.edu; wwelikal@nd.edu; pantsakl@nd.edu; hlin1@nd.edu}).}}
\maketitle
\thispagestyle{empty}

\begin{abstract}
    In this paper, we consider the problem of platooning control with  mismatched disturbances using the distributed adaptive backstepping method. The main challenges are: (1) maintaining the compositionality and the distributed nature of the controller, and (2) ensuring the robustness of the controller with respect to general types of disturbances. To address these challenges, we first propose a novel notion that we named \emph{Vector String Lyapunov Function}, whose existence implies $l_2$ weak string stability. This notion is based on the vector Lyapunov function-based stability analysis, which depends on the input-to-state-stability of a comparison system. Using this notion, we propose an adaptive backstepping controller for the platoon such that the compositionality and the distributed nature of the controller can be ensured while the internal stability and string stability of the closed-loop system are formally guaranteed.  
    Finally, simulation examples are provided to illustrate the effectiveness of the proposed control algorithm. In particular, we provide simulation results comparing our proposed control algorithm with a recently proposed control algorithm from the literature, under two types of information flow topologies and disturbances.
\end{abstract}

\section{Introduction}\label{sec:intro}

A vehicular platoon is a promising transportation pattern and has become an inexorable future trend in modern transportation systems \cite{jia2015survey}. In this way, vehicles are arranged in a line and maneuvered as an entire team. This can potentially improve traffic mobility, fuel efficiency and travel safety \cite{badnava2021platoon}. Therefore, the control of platoons has attracted widespread attentions over the years.

From the perspective of control theory, earlier efforts for platooning control mainly focused on linear control methods, such as PID \cite{dasgupta2017merging,fiengo2019distributed}, LQR/LQG \cite{tavan2015optimal,wang2022optimal} and $\mathcal{H}_{\infty}$ controllers \cite{herman2014nonzero,gao2016robust,zheng2017platooning} to regulate the inter-vehicle distances. Recent years have seen the adoption of more advanced nonlinear control approaches, such as model predictive control (MPC) \cite{goli2019mpc,chen2018robust}, optimal control \cite{morbidi2013decentralized,ploeg2013controller}, consensus-based control \cite{syed2012coordinated}, sliding-mode control (SMC) \cite{xiao2011practical,guo2016distributed}, backstepping control \cite{zhu2018distributed,chou2019backstepping} and intelligent control (e.g., using neural networks (NN) \cite{ji2018adaptive} and fuzzy logic \cite{li2010design}).

Among the above approaches, the backstepping approach allows systematic controller design for nonlinear systems in a strict feedback form, which is immensely suitable for the control of the platoon system. Despite of this, backstepping control methods have not been paid enough attentions compared to other control methods like MPC, SMC and intelligent control due to their lack of compositionality with respect to the length of the platoon. In particular, this implies that the controllers need to be redesigned when the vehicles join/leave the platoon, which is a waste of time and expense.


To guarantee the compositionality, a robust finite-time backstepping controller was proposed in \cite{gao2021adaptive} for the platooning control. However, in there, each vehicle only tracks the state profile generated by the direct predecessor
and thus only the information from the predecessor is used, i.e., the predecessor following information flow topology (PF) is observed. Similarly, by considering the PF, an adaptive backstepping controller is presented in \cite{zhu2020v2v}. However, to further improve the control performance and provide more redundancies for a platoon, more information exchanges are required between the vehicles via different types of information flow topologies \cite{studli2017vehicular}. Therefore, these approaches are not feasible for distributed controller design. To fill this gap, as an extension of \cite{zhu2020v2v}, a distributed adaptive backstepping controller is designed in \cite{zhu2018distributed} limiting to symmetric-double-nearest-neighbor type information flow topology. Nevertheless, this controller is not compositional since its complexity grows exponentially with the platoon length. A recent trend for the control of vehicular platoons using distributed backstepping is to combine the controller with NN to estimate the uncertainties of the unmodeled dynamics \cite{liu2021distributed}. Although such a controller is compositional and distributed, the design and the tuning of the NNs are always challenging.

Besides, it is worth noting that none of the aforementioned works considered the mismatched disturbances. By mismatched disturbances, we mean the disturbances that enter the vehicular acceleration channel where the control input is not directly involved. This type of disturbances are actually very common in vehicular networks. For example, disturbances due to rough road conditions will enter the vehicle model through the equation of acceleration rather than that of the jerk. When this type of disturbances exist, spacing errors between vehicles may be amplified. To solve this problem, existing works mainly resort to disturbance observers \cite{wang2020finitetime,wang2019integral}, NN estimation \cite{an2022distributed,guo2019neuroadaptive}, or other feedback control methods (e.g., an integrated control law \cite{hu2022distributed}), and combine such mechanisms with different control methods that we mentioned earlier. Nevertheless, to the best of our knowledge, the combination of backstepping and adaptive methods for platooning control with mismatched disturbances has not been investigated.

Note also that the string stability should always be taken into account in platooning control. To this end, we use the $l_2$ weak string stability notion introduced in \cite{knorn2014passivity}.
Compared with other types of string stability, the $l_2$ weak string stability is defined in time domain and it contains the effect of disturbances  \cite{studli2017vehicular}. Besides, it relaxes the constraint that the errors are not allowed to increase along the platoon and has a closer relationship to Lyapunov-based analysis \cite{feng2019string}. 

Traditionally, distributed backstepping controller synthesis for vehicular platoons relies on scalar Lyapunov functions and canceling out complex coupling terms during the controller design. Thus, the resulting controllers are either not compositional \cite{zhu2018distributed} or not robust to disturbances \cite{liu2021distributed}. Inspired by the extension of Lyapunov functions to vector Lyapunov functions used for stability analysis of high-dimensional systems \cite{nersesov2006stability}, in this paper, we propose a novel notion that we named \emph{Vector String Lyapunov Function (VSLF)}, whose existence implies $l_2$ weak string stability. This notion is based on the vector Lyapunov function-based stability analysis, which depends on the input-to-state-stability (ISS) of a comparison system. By applying VSLFs, for each step of backstepping, there is no need to cancel out the said complex coupling terms and thus the controller is more robust compared to scalar Lyapunov function-based designs. More importantly, based on this VSLF notion, the compositionality and the distributed nature of the proposed adaptive backstepping controller can be ensured in a centralized manner.

Therefore, based on the above discussion, it remains an open problem to design an adaptive backstepping controller for the vehicular platoons to ensure the compositional and distributed properties meanwhile guaranteeing the robustness with respect to general types of disturbances, e.g., mismatched disturbances. Our primary contributions can be summarized as follows:
\begin{enumerate}
    \item A novel distributed adaptive backstepping controller is designed for the control of the vehicular platoon with mismatched disturbances;
    \item The internal and string stability under the designed controller are established using a novel \emph{VSLF} notion;
    \item The proposed controller ensures the compositionality and the distributed nature with the aid of centralized adaptive laws;
    \item The robustness of the proposed controller is illustrated using simulation experiments with two types of disturbances (sinusoidal and Gaussian), and by comparing the tracking performance with an existing work \cite{wang2020finitetime}.
\end{enumerate}

This paper is organized as follows. Some necessary preliminaries and the problem formulation are presented in Section \ref{sec:background}. Our main results are presented in Section \ref{sec:main_results}, and are supported by {multiple simulation examples in Section \ref{sec:simulation}. Concluding remarks are provided in Section \ref{sec:conclusion}.

\section{Background}\label{sec:background}

\subsection{Notations}

The sets of real, natural, positive real and non-negative real numbers are denoted by $\mathbb{R}$, $\mathbb{N}$, $\mathbb{R}_+$ and $\mathbb{R}_{\geq 0}$, respectively. $\mathbb{R}^{n\times m}$ denotes the vector space of real matrices with $n$ rows and $m$ columns. An $n$-dimensional real vector is denoted by $\mathbb{R}^n$. Define the index sets $\mathcal{I}_N:=\{1, 2,...,N\}$ and $\mathcal{I}_N^0:=\mathcal{I}_N\cup\{0\}$, where $N\in \mathbb{N}$. For two vectors $\mathbf{x}$, $\mathbf{y}\in\mathbb{R}^n$, we use $\mathbf{x}\preceq \mathbf{y}$ to indicate that every component of $\mathbf{x}-\mathbf{y}$ is non-positive, i.e., $x_i\leq y_i$ for all $i\in \mathcal{I}_N$. $\mathbf{x}\ngeq \mathbf{y}$ denotes that at least one element in $\mathbf{x}$, e.g., $x_i$ satisfies $x_i<y_i$. The $1$-norm and Euclidean norm of a vector are given by $|\mathbf{x}|_1=\sum_{i=1}^n |x_i|$ and $|\mathbf{x}|_2 = |\mathbf{x}| = \sqrt{\mathbf{x}^T\mathbf{x}}$, respectively. The $\mathcal{L}_2$ and $\mathcal{L}_{\infty}$ vector function norms are given by $\|\mathbf{x}(t)\|=\sqrt{\int_{0}^{\infty}|\mathbf{x}(t)|^2dt}$ and $\|\mathbf{x}(t)\|_{\infty} = \max_{i\in\mathcal{I}_N} \{|x_i(t)|\}$, respectively. We use $\mathcal{K}$, $\mathcal{K}_{\infty}$ and $\mathcal{KL}$ to denote different classes of comparison functions, see e.g., \cite{sontag1995characterizations}. For a function of time $t$, we will omit the notation $(t)$ when it is not necessary for ease of expression.




\subsection{Preliminary Definitions}\label{subsec:RAS_for_PWA}

Consider the tracking error dynamics of a platoon $\Sigma$ comprised of $N$ followers $\Sigma_i, i\in\mathcal{I}_N$ and a leader $\Sigma_0$ as:
\begin{equation}\label{Eq:general_platoon}
        \Sigma:\left\{\begin{array}{ll}
    \dot{\mathbf{e}}=\mathcal{F}(\mathbf{e},\boldsymbol{\omega}), &  \\
    \mathbf{y}=\mathcal{G}(\mathbf{e}), & 
    \end{array}\right.
    \end{equation}
where $\mathbf{e}=[e_1,e_2,...,e_N]^T\in\mathbb{R}^{nN}$ is the tracking error of the platoon and $\boldsymbol{\omega}=[\omega_1,\omega_2,...,\omega_N]^T\in\mathbb{R}^{mN}$ is the external disturbance, where $e_i\in\mathbb{R}^n$ and $\omega_i\in\mathbb{R}^m$ are the tracking error and the external disturbance of the $i^{th}$ vehicle, respectively, for $i\in\mathcal{I}_N$. The function $\mathcal{F}:\mathbb{R}^{nN}\times\mathbb{R}^{mN}\rightarrow\mathbb{R}^{nN}$ satisfies $\mathcal{F}(0, 0) = 0$ and is assumed to be locally Lipschitz continuous in both arguments. The function $\mathcal{G}:\mathbb{R}^{nN}\rightarrow\mathbb{R}^{pN}$ is the output map.



For string stability analysis of the tracking error dynamics \eqref{Eq:general_platoon}, we first recall the definition of the $l_2$ weak string stability.
\begin{definition}(\textit{$l_2$ weak string stability \cite{knorn2014passivity}})\label{def:l2_weakly_ss}
    The equilibrium $\mathbf{e}=\mathbf{0}$ of \eqref{Eq:general_platoon} is $l_2$ weakly string stable with respect to the disturbances $\boldsymbol{\omega}(t)$, if given any $\epsilon$, there exists $\delta_1(\epsilon)>0$ and $\delta_2(\epsilon)>0$ (both independent of $N$) such that
\begin{equation}\label{Eq:DSS_initial}
    |\mathbf{e}(0)|<\delta_1(\epsilon),\ \|\boldsymbol{\omega}(\cdot)\|<\delta_2(\epsilon)
\end{equation}
implies:
\begin{equation}\label{Eq:DSS_estimation}
    \|\mathbf{e}(t)\|_{\infty}:=\underset{t\geq 0}{\sup}\ |\mathbf{e}(t)|<\epsilon,\ \mbox{$\forall N\geq 1$}.
\end{equation}
\end{definition}



Now, to maintain the compositionality and the distributed nature of the controller, we propose a novel notion that we named \emph{VSLF}, whose existence implies the $l_2$ weak string stability (formally proven in Section \ref{sec:main_results}).

\begin{definition}(\textit{VSLF})\label{def:VSLF}
    Assume that the tracking error dynamics of the platoon \eqref{Eq:general_platoon} can be partitioned into $\bar{N}$ subsystems, where $\bar{N}=n$. Consider a continuous vector function $\mathcal{V}=[\mathcal{V}_1,...,\mathcal{V}_{\bar{N}}]^T:\mathbb{R}^{nN}\rightarrow\mathbb{R}_{\geq 0}^{\bar{N}}$, with each $\mathcal{V}_k:\mathbb{R}^{N}\rightarrow\mathbb{R}_{\geq 0}$ being a scalar Lyapunov function of the $k^{th}$ subsystem of \eqref{Eq:general_platoon} for $k\in \mathcal{I}_{\bar{N}}$ that satisfies the following inequalities:
\begin{gather}
    \underline{\alpha}(|\mathbf{e}|)\leq \sum_{k=1}^{\bar{N}} \mu_k\mathcal{V}_k\leq\overline{\alpha}(|\mathbf{e}|) \label{def:state_error_bound},\\
    D^+\mathcal{V}\preceq \Gamma(\mathcal{V})+\gamma(\|\boldsymbol{\omega}\|)    \label{def:vector_dissipativity_condition},
\end{gather}
where $\underline{\alpha}$ and $\overline{\alpha}$ are some class-$\mathcal{K}_{\infty}$ functions and $\mu_k\in\mathbb{R}_+$; $D^+\mathcal{V}$ is the upper-right Dini derivative in the direction of $\mathcal{V}$; $\Gamma(\cdot)$ is some class-$\mathcal{W}$ function\footnote{A function $\Gamma(\mathbf{s})=[\Gamma_1(\mathbf{s}),...,\Gamma_{\bar{N}}(\mathbf{s})]^T:\mathbb{R}^{\bar{N}}\rightarrow\mathbb{R}^{\bar{N}}$ is of class-$\mathcal{W}$ if $\Gamma_i(\mathbf{s}')\leq\Gamma_i(\mathbf{s}'')$, $i\in\mathcal{I}_{\bar{N}}$ for all $\mathbf{s}',\mathbf{s}''\in\mathbb{R}^{\bar{N}}$ such that $s'_j\leq s''_j$, $s'_i=s''_i$, $j\in\mathcal{I}_{\bar{N}}$, $i\neq j$, where $s_i$ denotes the $i^{th}$ component of the vector $\mathbf{s}$. Note that if $\Gamma(\mathbf{s})=\Gamma\mathbf{s}$ (with a slight abuse of notation), then all the off-diagonal entries of the matrix $\Gamma$ are non-negative \cite{nersesov2006stability}.};
$\gamma(\cdot)$ is some vector function with non-negative components and $\gamma(\|\boldsymbol{\omega}\|)$ denotes $[\gamma_1(\|\boldsymbol{\omega}_1\|),...,\gamma_{\bar{N}}(\|\boldsymbol{\omega}_{\bar{N}}\|)]^T$. If there exists a comparison system for \eqref{def:vector_dissipativity_condition} of the form:
\begin{equation}\label{Eq:comparison_system}
\dot{\mathbf{z}}=\Gamma(\mathbf{z})+\gamma(\|\boldsymbol{\omega}\|),
\end{equation}
which is ISS from $\boldsymbol{\omega}$ to $\mathbf{z}$, where $\mathbf{z}\in\mathbb{R}_{\geq 0}^{\bar{N}}$, then $\mathcal{V}$ is called a VSLF of the platoon $\Sigma$.
\end{definition}
\begin{remark}
    Any platoon system $\Sigma$ can be partitioned into $\bar{N}$ subsystems since we can distinguish the tracking errors of spacing, velocity and acceleration for each vehicle and stack each type of errors into a vector. Then, we can simply refer to each type of tracking error as a subsystem.
\end{remark}
\begin{remark}
    To guarantee the ISS of comparison system in a general form as \eqref{Eq:comparison_system}, the small-gain condition with additive input is a sufficient condition (for more details, see Corollary 5.6 of \cite{dashkovskiy2010small}). For undirected information flow topology, such a small-gain condition takes the form $D\circ\Gamma(\mathbf{z})\ngeq \mathbf{z}$, where the operator $D:=\mbox{diag}(\mbox{id}+\alpha)$ ($\mbox{id}$ is the identity map) with $\alpha\in\mathcal{K}_{\infty}$. Note that for linear comparison systems, this condition is reduced to the condition that the linear system matrix $\Gamma$ being Hurwitz. 
\end{remark}


\subsection{Problem Formulation}

Consider the longitudinal dynamics of the $i^{th}$ follower $\Sigma_i$ in the platoon $\Sigma$ as in \cite{chou2019backstepping} but with mismatched disturbances:
\begin{equation}\label{Eq:platoon_longitudinal}
        \Sigma_i:
        \begin{cases}
            \dot{x}_{i}(t)\hspace{-1mm}=\hspace{-1mm}v_{i}(t), \\
            \dot{v}_{i}(t)\hspace{-1mm}=\hspace{-1mm}a_{i}(t)\hspace{-1mm}+\hspace{-1mm}d_{vi}(t), \\
            \dot{a}_{i}(t)\hspace{-1mm}=\hspace{-1mm}f_{i}(v_{i}(t),a_{i}(t))\hspace{-1mm}+\hspace{-1mm}g_{i}(v_{i}(t))u_i(t)\hspace{-1mm}+\hspace{-1mm}d_{ai}(t),
        \end{cases}
    \end{equation}
for $i\in\mathcal{I}_N$, where
\begin{equation}
    \begin{split}
        f_{i}(v_{i},a_{i})&:=-\frac{1}{\tau_i}\left(a_i+\frac{A_{f,i}\rho C_{d,i}v_i^2}{2m_i}+C_{r,i}\right)\\
        &-\frac{A_{f,i}\rho C_{d,i}v_ia_i}{m_i},\\
        g_{i}(v_{i})&:=\frac{1}{m_i\tau_i},
    \end{split}
\end{equation}
are the nonlinear aerodynamics of the vehicle. We lumped all the uncertainties in these two terms together with the external disturbances in the jerk as $d_{ai}$, and $d_{vi}$ is the mismatched disturbance. $x_i\in\mathbb{R}$, $v_i\in\mathbb{R}$ and $a_i\in\mathbb{R}$ are the position, speed and acceleration, respectively; $m_i$ is the mass; $A_{f,i}$ is the effective frontal area; $\rho$ is the air density; $C_{d,i}$ is the coefficient of the aerodynamic drag; $C_{r,i}$ is the coefficient of the rolling resistance; $\tau_i$ is the engine time constant; and $u_i$ is the control input to be designed. 
\begin{assumption}\label{ass:1}
The disturbance terms $d_{vi}$ and $d_{ai}$ in \eqref{Eq:platoon_longitudinal} and its first time derivative are bounded by known constants, i.e., $|d_{vi}|\leq\delta_{vi}$, $|d_{ai}|\leq\delta_{ai}$, $|\dot{d}_{vi}|\leq\bar{\delta}_{vi}$ and $|\dot{d}_{ai}|\leq\bar{\delta}_{ai}$, where $\delta_{vi}$, $\delta_{ai}$, $\bar{\delta}_{vi}$, $\bar{\delta}_{ai}\in\mathbb{R}_+$.
\end{assumption}

Regarding the dynamics of the leading vehicle $\Sigma_0$, we make the following assumption.
\begin{assumption}\label{ass:2}
    The dynamics of the leading vehicle are
\begin{equation}\label{Eq:leader_dynamics}
         \Sigma_0: \left \{\begin{array}{ll}
        \dot{x}_0(t)=v_0(t), &\\
        \dot{v}_0(t)=a_0(t),
        \end{array}\right.
    \end{equation}
where $x_0\in\mathbb{R}$, $v_0\in\mathbb{R}$ and $a_0\in\mathbb{R}$ are the position, speed and acceleration, respectively. The velocity profile of the leader $v_0(t)$ is free to be selected such that the corresponding jerk $\dot{a}_0(t)=0$ \cite{zhu2018distributed}.
\end{assumption}

In particular, the information flow topology of the platoon is modeled by an augmented directed graph $G(V,E)$, where $V\in\mathcal{I}_N^0$ is the node set, and $E\subseteq V\times V$ is the edge set. To characterize the connectivity of the graph $G$, we denote the adjacency matrix as $\mathcal{A}=[A_{ij}]\in\mathbb{R}^{N\times N}$, where $A_{ij}=1$ if $(j,i)\in E$ and $A_{ij}=0$ otherwise, for $i,j\in\mathcal{I}_N$. Note that $(j,i)\in E$ implies that there is an information flow from node $j$ to node $i$, i.e., node $i$ receives the information of node $j$. In the graph $G$, we omit self-loops, i.e., $A_{ii}=0$, for $i\in\mathcal{I}_N$. The Laplacian matrix of the graph $G$ is defined as $\mathcal{L}=[L_{ij}]\in\mathbb{R}^{N\times N}$, where $L_{ij}=-A_{ij}$ if $i\neq j$, and $L_{ij}=\sum_{k=1,k\neq i}^N A_{ik}$, otherwise, for $i,j\in\mathcal{I}_N$. The pinning matrix of the graph $G$ is defined as $\mathcal{P}=\mbox{diag}\{P_i\}\in\mathbb{R}^{N\times N}$, where $P_i$ is used to indicate the existence of an edge from the leader node $0$ to node $i$, i.e., if $P_i=1$, the $i^{th}$ node can receive the leader’s information; $P_i=0$, otherwise.

\begin{assumption}(\textit{positive definite topology \cite{wu2016distributed}})\label{ass:3}
    There exists a spanning tree in the graph $G$ and the information flow topology between followers is assumed to be undirected.
\end{assumption}

To achieve the coordinated control of the vehicular platoon, the synchronization error for the $i^{th}$ follower can be defined as follows:
\begin{align*}
    &e_{ix}:=P_i(x_0-x_i-d_i)+\sum_{j\in\mathcal{I}_N\backslash\{i\}}A_{ij}(x_j-x_i-d_i+d_j), \\
    &e_{iv}:=\dot{e}_{ix}=P_i(v_0-v_i)+\sum_{j\in\mathcal{I}_N\backslash\{i\}}A_{ij}(v_j-v_i),
\end{align*}
where $d_m=\sum_{k\in\mathcal{I}_m}(\delta_{dk}+L_k)$, for $m=i,j$. Here, $\delta_{dk}$ is the desired inter-distance between two successive vehicles and $L_k$ is the length of the $k^{th}$ vehicle as shown in Fig. \ref{fig:config_platoon}.

\begin{figure}[!t]
    \centering
    \includegraphics[width=0.95\linewidth]{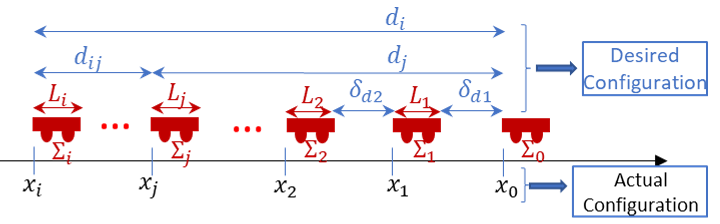}
    \caption{Configuration of the platoon.}
    \label{fig:config_platoon}
\end{figure}

Thus, the error dynamics of the $i^{th}$ follower $\Sigma_i$ in the platoon can be written as
\begin{equation}\label{Eq:platoon_error}
        \Sigma_{ei}: \left \{\begin{array}{ll}
        \dot{e}_{ix}=e_{iv}, &\\
        \dot{e}_{iv}=P_i(a_0-a_i)+\sum_{j=1,j\neq i}^{N}A_{ij}(a_j-a_i)+\bar{d}_{vi}, &\\
        \dot{e}_{ai}=-f_{i}(v_{i},a_{i})-g_{i}(v_{i})u_i-d_{ai}, &
        \end{array}\right.
    \end{equation}
where $\bar{d}_{vi}=-P_id_{vi}+\sum_{j=1,j\neq i}^{N}A_{ij}(d_{vj}-d_{vi})$.

The stacked (vectored) version of \eqref{Eq:platoon_error} can be written as: 
\begin{equation}\label{Eq:platoon_error_vector}
        \Sigma_{e}: \left \{\begin{array}{ll}
        \dot{\mathbf{e}}_x=\mathbf{e}_v, &\\
        \dot{\mathbf{e}}_v=\mathcal{H}(\mathbf{e}_a + \mathbf{D}_v), &\\
        \dot{\mathbf{e}}_a=-\mathbf{F}(\mathbf{v},\mathbf{a})-\mathbf{G}(\mathbf{v})\mathbf{u}+\mathbf{D}_a, &
        \end{array}\right.
    \end{equation}
where the matrix $\mathcal{H}=\mathcal{P+L}$, the nonlinear vector function $\mathbf{F}(\mathbf{v},\mathbf{a})=[f_1(v_1,a_1),...,f_{N}(v_{N},a_{N})]^T$, $\mathbf{G}(\mathbf{v})=\mbox{diag}\{g_{i}(v_{i})\}$. The mismatched disturbances can be lumped as $\mathbf{D}_v=-[d_{v1},d_{v2},...,d_{vN}]^T$. The disturbances in the input channel can be lumped as $\mathbf{D}_a=-[d_{a1},d_{a2},...,d_{aN}]^T$. Based on the Assumption \ref{ass:1}, the disturbances $\mathbf{D}_v$, $\mathbf{D}_a$ and their first time derivatives are bounded, i.e., $\|\mathbf{D}_v\|\leq\Delta_v$, $\|\mathbf{D}_a\|\leq\Delta_a$, $\|\dot{\mathbf{D}}_v\|\leq\bar{\Delta}_v$, $\|\dot{\mathbf{D}}_a\|\leq\bar{\Delta}_a$, where $\Delta_v$, $\Delta_a$, $\bar{\Delta}_v$, $\bar{\Delta}_a$ are determined by the constants in Assumption \ref{ass:1}.

Our objective is to design an adaptive backstepping controller for the system \eqref{Eq:platoon_error_vector} using the proposed VSLF such that it is compositional, distributed and robust to general types of disturbances. Besides, the internal and string stability of the platoon $\Sigma$ are guaranteed.


\begin{remark}
By Assumption \ref{ass:3}, we mean that there is a directed path from the root node (leading vehicle) to every other node (following vehicles) in the graph \cite{wang2014leader}. Further, since the information flow topology is undirected between the followers, it is trivial to show that $\mathcal{H}=\mathcal{H}^T>0$ \cite{wu2016distributed}. Note also that Assumption 3 holds for information flow topology types such as bidirectional, bidirectional-leader, symmetric-double-nearest-neighbor, etc \cite{wu2016distributed}.
\end{remark}

\section{Main Results}\label{sec:main_results}

In this section, we provide our main theoretical results. First, we show that if such a proposed VSLF can be found, it implies $l_2$ weak string stability. Then, based on this claim, we design a novel adaptive backstepping controller such that it is compositional and distributed in a centralized manner. Finally, we show that the internal stability and the string stability of the platoon are both guaranteed using our proposed VSLF notion.


\subsection{The Existence of a VSLF Implies $l_2$ Weak String Stability}

In this part, we prove that the existence of a VSLF in Def. \ref{def:VSLF} implies the $l_2$ weak string stability in Def. \ref{def:l2_weakly_ss}.
\begin{theorem}\label{theorem:VSS_DSS}
    If there exists a VSLF for the tracking error dynamics \eqref{Eq:general_platoon} of the platoon $\Sigma$, then it is $l_2$ weakly string stable.
\end{theorem}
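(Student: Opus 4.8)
The plan is to chain together three ingredients --- the vector comparison principle, the assumed ISS of the comparison system \eqref{Eq:comparison_system}, and the sandwich estimate \eqref{def:state_error_bound} --- while tracking carefully the fact that every constant is tied to the \emph{fixed} dimension $\bar{N}=n$ rather than to the platoon length $N$. The overall logic is: bound $\mathcal{V}(\mathbf{e}(t))$ componentwise by the comparison state $\mathbf{z}(t)$, bound $\mathbf{z}(t)$ via ISS, and then sandwich $|\mathbf{e}(t)|$ between class-$\mathcal{K}_\infty$ functions of $\mathbf{z}$.

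First I would invoke the vector comparison lemma. Because $\Gamma$ is of class-$\mathcal{W}$ (quasimonotone), the differential inequality \eqref{def:vector_dissipativity_condition} together with the comparison system \eqref{Eq:comparison_system} initialized at $\mathbf{z}(0)=\mathcal{V}(\mathbf{e}(0))$ yields the componentwise domination $\mathcal{V}(\mathbf{e}(t))\preceq\mathbf{z}(t)$ for all $t\geq 0$. Quasimonotonicity is precisely the hypothesis that validates this step, so I would state it as an auxiliary lemma (the classical vector Lyapunov comparison result of \cite{nersesov2006stability}) and apply it directly. Next I would use the assumed ISS of \eqref{Eq:comparison_system} from $\boldsymbol{\omega}$ to $\mathbf{z}$: there exist $\beta\in\mathcal{KL}$ and $\sigma\in\mathcal{K}$ with $|\mathbf{z}(t)|\leq\beta(|\mathbf{z}(0)|,t)+\sigma(\|\boldsymbol{\omega}\|)$, hence $\sup_{t\geq 0}|\mathbf{z}(t)|\leq\beta(|\mathbf{z}(0)|,0)+\sigma(\|\boldsymbol{\omega}\|)$.

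I would then combine these with \eqref{def:state_error_bound}. Since every $\mu_k>0$ and every $\mathcal{V}_k,z_k\geq 0$, the domination $\mathcal{V}(\mathbf{e})\preceq\mathbf{z}$ gives $\underline{\alpha}(|\mathbf{e}(t)|)\leq\sum_k\mu_k\mathcal{V}_k(\mathbf{e}(t))\leq\sum_k\mu_kz_k(t)\leq c\,|\mathbf{z}(t)|$, where $c:=|\boldsymbol{\mu}|$ (with $\boldsymbol{\mu}=[\mu_1,\dots,\mu_{\bar{N}}]^T$) depends only on $\bar{N}$ via Cauchy--Schwarz. For the initial data I would bound $|\mathcal{V}(\mathbf{e}(0))|\leq\tilde\alpha(|\mathbf{e}(0)|)$ with $\tilde\alpha\in\mathcal{K}_\infty$, obtained from the upper half of \eqref{def:state_error_bound} together with $\min_k\mu_k>0$. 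Applying $\underline{\alpha}^{-1}$ and taking the supremum over $t$ produces
\begin{equation*}
\sup_{t\geq 0}|\mathbf{e}(t)|\leq\underline{\alpha}^{-1}\!\left(c\big[\beta(\tilde\alpha(|\mathbf{e}(0)|),0)+\sigma(\|\boldsymbol{\omega}\|)\big]\right).
\end{equation*}

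Finally, given any $\epsilon>0$ I would choose $\delta_1(\epsilon),\delta_2(\epsilon)>0$ small enough that $c[\beta(\tilde\alpha(\delta_1),0)+\sigma(\delta_2)]<\underline{\alpha}(\epsilon)$, which is possible by continuity of the comparison functions and $\beta(0,0)=0$, $\sigma(0)=0$; then $|\mathbf{e}(0)|<\delta_1$ and $\|\boldsymbol{\omega}(\cdot)\|<\delta_2$ force $\sup_{t\geq 0}|\mathbf{e}(t)|<\epsilon$, which is exactly \eqref{Eq:DSS_estimation}. The crux --- and the reason this argument certifies \emph{string} stability rather than mere stability --- is that the comparison system carries the fixed dimension $\bar{N}=n$, independent of $N$; consequently $\beta$, $\sigma$, $\underline{\alpha}$, $\overline{\alpha}$, $\tilde\alpha$, and $c$ are all $N$-independent, so the resulting $\delta_1,\delta_2$ satisfy the $N$-uniformity required by Definition \ref{def:l2_weakly_ss}. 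I expect the main obstacle to be justifying this $N$-independence rigorously: one must confirm that the sandwich functions in \eqref{def:state_error_bound} and the ISS gains are genuinely uniform in $N$, which is the structural payoff of collapsing the $nN$-dimensional error into an $n$-dimensional comparison system, as opposed to the standard difficulty where such bounds degrade as the platoon grows.
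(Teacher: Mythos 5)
Your proposal is correct and follows essentially the same route as the paper's proof: vector comparison principle to get $\mathcal{V}(\mathbf{e}(t))\preceq\mathbf{z}(t)$, ISS of the comparison system to bound $|\mathbf{z}(t)|$, Cauchy--Schwarz with the weights $\mu_k$, and the sandwich condition \eqref{def:state_error_bound} to convert back to $|\mathbf{e}(t)|$ and extract $N$-independent $\delta_1,\delta_2$. The only cosmetic differences are that the paper splits $\underline{\alpha}^{-1}(a+b)\leq\underline{\alpha}^{-1}(2a)+\underline{\alpha}^{-1}(2b)$ and allocates $\epsilon$ via $\frac{1}{p}+\frac{1}{q}=1$ to define $\delta_1,\delta_2$ separately, whereas you choose them jointly; your explicit emphasis on the $N$-uniformity of all gains is the right reading of why the argument yields string stability rather than mere stability.
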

\begin{proof}
Based on Lemma 3.3 of \cite{ruffer2010connection}, we know that the solution of \eqref{Eq:comparison_system}, as long as it exists in the positive orthant, will evolve within this region. Then, based on the Def. \ref{def:VSLF} and the comparison principle, for $\mathbf{z}(0)=\mathcal{V}(0)$, we have:
\begin{equation}\label{Eq:Ve_z}
    \mathcal{V}(\mathbf{e}(t))\preceq\mathbf{z}(t), \mbox{ for all $t\geq 0$,}
\end{equation}
where $\mathbf{z}(t)$ is the solution of \eqref{Eq:comparison_system} with respect to $\mathbf{z}(0)$ and $\boldsymbol{\omega}(t)$. Since the comparison system \eqref{Eq:comparison_system} is ISS, it implies that:
\begin{equation}\label{Eq:comparison_ISS_condition}
    |\mathbf{z}(t)|\leq\beta_z(|\mathbf{z}(0)|,t)+\gamma_z(\|\boldsymbol{\omega}\|),
\end{equation}
where $\beta_z$ is a class-$\mathcal{KL}$ function and $\gamma_z$ is a class-$\mathcal{K}$ function.

Based on \eqref{Eq:Ve_z} and Cauchy–Schwarz inequality, we have: 
\begin{equation}\label{Eq:Cauchy_Schwarz}
    \sum_{k=1}^{\bar{N}} \mu_k\mathcal{V}_k=\boldsymbol{\mu}^T\mathcal{V}\leq\boldsymbol{\mu}^T\mathbf{z}\leq |\boldsymbol{\mu}||\mathbf{z}|,
\end{equation}
where $\boldsymbol{\mu}=[\mu_1,\mu_2,...,\mu_{\bar{N}}]^T$.

Therefore, using \eqref{Eq:comparison_ISS_condition} and \eqref{Eq:Cauchy_Schwarz}, we have:
\begin{align}
    \sum_{k=1}^{\bar{N}} \mu_k\mathcal{V}_k&\leq\bar{\beta}_z(|\mathbf{z}(0)|,t)+\bar{\gamma}_z(\|\boldsymbol{\omega}\|) \nonumber \\
    &\leq \bar{\beta}_z(|\mathbf{z}(0)|_1,t)+\bar{\gamma}_z(\|\boldsymbol{\omega}\|) \nonumber \\
    &\leq\bar{\beta}_z\left(\sum_{k=1}^{\bar{N}}\frac{\mu_k}{\mu_{\min}}z_k(0),t\right)+\bar{\gamma}_z(\|\boldsymbol{\omega}\|)  \nonumber  \\
    &=\bar{\beta}_z\left(\sum_{k=1}^{\bar{N}}\frac{\mu_k}{\mu_{\min}}\mathcal{V}_k(0),t\right)+\bar{\gamma}_z(\|\boldsymbol{\omega}\|),
\end{align}
where $\bar{\beta}_z(\cdot,t)=|\boldsymbol{\mu}|\beta_z(\cdot,t)$ and $\bar{\gamma}_z=|\boldsymbol{\mu}|\gamma_z(\cdot)$. 

Thus, based on condition \eqref{def:state_error_bound}, we have:
\begin{align}\label{Eq:ISSS_condition}
    &\|\mathbf{e}(t)\|_{\infty} = \max_{i\in\mathcal{I}_N}\{|e_i(t)|\}\leq
    \sqrt{\mathbf{e}^T(t)\mathbf{e}(t)} = |\mathbf{e}(t)| \nonumber \\
    \leq&\underline{\alpha}^{-1}\left(\bar{\beta}_z\left(\frac{1}{\mu_{\min}}\overline{\alpha}(|\mathbf{e}(0)|),t\right)+\bar{\gamma}_z(\|\boldsymbol{\omega}\|)\right)  \nonumber \\
    \leq&\underline{\alpha}^{-1}\left(2\bar{\beta}_{z0}\left(\frac{1}{\mu_{\min}}\overline{\alpha}\left(|\mathbf{e}(0)|\right)\right)\right)+\underline{\alpha}^{-1}\left(2\bar{\gamma}_z(\|\boldsymbol{\omega}\|)\right),
\end{align}
where $\bar{\beta}_{z0}(\cdot)=\bar{\beta}_z(\cdot,0)$, which is a class-$\mathcal{K}$ function.

To bound the right-hand side of \eqref{Eq:ISSS_condition} such that $\|\mathbf{e}(t)\|_{\infty}<\frac{1}{p}\epsilon+\frac{1}{q}\epsilon=\epsilon$, where $p$, $q$ satisfy $\frac{1}{p}+\frac{1}{q}=1$, we can find bounds for $|\mathbf{e}(0)|$ and $\|\boldsymbol{\omega}(t)\|$ as $\delta_1(\epsilon)=\overline{\alpha}^{-1}\left(\mu_{\min}\bar{\beta}_{z0}^{-1}\left(\frac{1}{2}\underline{\alpha}\left(\frac{1}{p}\epsilon\right)\right)\right)$ and $\delta_2(\epsilon)=\bar{\gamma}_z^{-1}\left(\frac{1}{2}\underline{\alpha}\left(\frac{1}{q}\epsilon\right)\right)$ satisfying \eqref{Eq:DSS_initial}, respectively. This completes the proof.
\end{proof}




\subsection{Distributed Adaptive Backstepping Controller Design}

In this part, we design our distributed adaptive backstepping controller for the platoon error dynamics \eqref{Eq:platoon_error_vector}.

\paragraph*{\textbf{Step 1}} Define $\mathbf{e}_1:=\mathbf{e}_x$, $\mathbf{e}_2:=\mathbf{e}_v-\mathbf{e}_v^*$, where $\mathbf{e}_v^*$ is some virtual controller that is designed to stabilize $\mathbf{e}_1$. Choose the first Lyapunov function as $V_1:=|\mathbf{e}_1|$, then:
\begin{equation}\label{Eq:dotV1}
    \dot{V}_1=\mbox{sgn}^T(\mathbf{e}_1)(\mathbf{e}_2+\mathbf{e}_v^*),
\end{equation}
where $\mbox{sgn}(\cdot):=\frac{(\cdot)}{|\cdot|}$. Design the virtual control law as:
\begin{equation}\label{Eq:1st_virtual_law}
    \mathbf{e}_v^*:=-\mathbf{K}_1\mathbf{e}_1,
\end{equation}
where $\mathbf{K}_1>0$ is a diagonal control parameter matrix.

Substitute \eqref{Eq:1st_virtual_law} into \eqref{Eq:dotV1} to obtain:
\begin{align}
  \hspace{-2mm}  \dot{V}_1=&\mbox{sgn}^T(\mathbf{e}_1)(\mathbf{e}_2-\mathbf{K}_1\mathbf{e}_1)    \nonumber    \\
  =&\frac{\mathbf{e}_1^T}{|\mathbf{e}_1|}(\mathbf{e}_2-\mathbf{K}_1\mathbf{e}_1)  \nonumber    \\
  \leq&\frac{1}{|\mathbf{e}_1|}|\mathbf{e}_1||\mathbf{e}_2|-\lambda_{\min}(\mathbf{K}_1)|\mathbf{e}_1|   \nonumber    \\
  \leq& -\lambda_{\min}(\mathbf{K}_1)|\mathbf{e}_1|+|\mathbf{e}_2|.
\end{align}

Then, the first equation of \eqref{Eq:platoon_error_vector} becomes:
\begin{equation}
    \dot{\mathbf{e}}_1=\dot{\mathbf{e}}_x=-\mathbf{K}_1\mathbf{e}_1+\mathbf{e}_2.
\end{equation}

\paragraph*{\textbf{Step 2}} Define $\mathbf{e}_3:=\mathbf{e}_a-\mathbf{e}_a^*$, where $\mathbf{e}_a^*$ is some virtual controller that is designed to stabilize $\mathbf{e}_2$. Thus, the error dynamics of $\mathbf{e}_2$ can be written as:
\begin{align}\label{Eq:error2}
    \dot{\mathbf{e}}_2&=\dot{\mathbf{e}}_v-\dot{\mathbf{e}}_v^*  \nonumber  \\
        &=\mathcal{H}(\mathbf{e}_a+\mathbf{D}_v)+\mathbf{K}_1\dot{\mathbf{e}}_1   \nonumber \\
        &=\mathcal{H}(\mathbf{e}_a+\mathbf{D}_v)+\mathbf{K}_1(-\mathbf{K}_1\mathbf{e}_1+\mathbf{e}_2)  \nonumber  \\
        &=\mathcal{H}(\mathbf{e}_3+\mathbf{e}_a^*+\mathbf{D}_v)+\mathbf{K}_1(-\mathbf{K}_1\mathbf{e}_1+\mathbf{e}_2) \nonumber   \\
        &=\mathcal{H}(\mathbf{e}_3+\mathbf{e}_a^*+\mathbf{D}_v)-\mathbf{K}_1^2\mathbf{e}_1+\mathbf{K}_1\mathbf{e}_2.
\end{align}

Design the virtual controller $\mathbf{e}_a^*$ as follows:
\begin{equation}\label{Eq:2nd_virtual_law}
    \mathbf{e}_a^*=-\mathbf{K}_2\mathbf{e}_2-\hat{\mathbf{D}}_v,
\end{equation}
where $\mathbf{K}_2>0$ is a diagonal control parameter matrix; $\hat{\mathbf{D}}_v$ is the result of some adaptive law to be designed. Then, by substituting \eqref{Eq:2nd_virtual_law} into \eqref{Eq:error2}, it becomes:
\begin{equation}
    \begin{split}
        \dot{\mathbf{e}}_2&=\mathcal{H}\mathbf{e}_3-\mathcal{H}\mathbf{K}_2\mathbf{e}_2-\mathbf{K}_1^2\mathbf{e}_1+\mathbf{K}_1\mathbf{e}_2+\mathcal{H}\tilde{\mathbf{D}}_v,
    \end{split}
\end{equation}
where $\tilde{\mathbf{D}}_v=\mathbf{D}_v-\hat{\mathbf{D}}_v$.

Choose the second Lyapunov function as $V_2:=|\mathbf{e}_2|+\frac{1}{2\epsilon_1}\|\tilde{\mathbf{D}}_v\|^2$. Then:
\begin{align}\label{Eq:dotV2}
  &\dot{V}_2=\mbox{sgn}^T(\mathbf{e}_2)(\mathcal{H}\mathbf{e}_3-\mathcal{H}\mathbf{K}_2\mathbf{e}_2-\mathbf{K}_1^2\mathbf{e}_1+\mathbf{K}_1\mathbf{e}_2+\mathcal{H}\tilde{\mathbf{D}}_v) \nonumber \\
    &+\frac{1}{\epsilon_1}\left(\dot{\mathbf{D}}_v-\dot{\hat{\mathbf{D}}}_v\right)^T\tilde{\mathbf{D}}_v   \nonumber     \\
    &\leq |\mathcal{H}\mathbf{e}_3|-\lambda_{\min}(\mathcal{H}\mathbf{K}_2-\mathbf{K}_1)|\mathbf{e}_2|+\lambda_{\max}(\mathbf{K}_1^2)|\mathbf{e}_1|   \nonumber     \\
    &+\left(\mbox{sgn}^T(\mathbf{e}_2)\mathcal{H}-\frac{1}{\epsilon_1}\dot{\hat{\mathbf{D}}}_v^T\right)\tilde{\mathbf{D}}_v+\frac{1}{\epsilon_1}\bar{\Delta}_v\|\tilde{\mathbf{D}}_v\|.
\end{align}

Design the adaptive law to get $\hat{\mathbf{D}}_v$ as:
\begin{align}\label{Eq:Dv_hat}
    \dot{\hat{\mathbf{D}}}_v&=-\epsilon_1\kappa_1\hat{\mathbf{D}}_v+\epsilon_1\mathcal{H}^T\mbox{sgn}(\mathbf{e}_2),   \nonumber  \\
    &=-\epsilon_1\kappa_1\hat{\mathbf{D}}_v+\epsilon_1\mathcal{H}\mbox{sgn}(\mathbf{e}_2)
\end{align}
where $\epsilon_1$, $\kappa_1>0$ are scalar control parameters.

Substitute \eqref{Eq:Dv_hat} into \eqref{Eq:dotV2} to obtain:
\begin{align*}
    &\dot{V}_2\leq |\mathcal{H}\mathbf{e}_3|-\lambda_{\min}(\mathcal{H}\mathbf{K}_2-\mathbf{K}_1)|\mathbf{e}_2|+\lambda_{\max}(\mathbf{K}_1^2)|\mathbf{e}_1|+\\
    &\kappa_1\hat{\mathbf{D}}_v^T\tilde{\mathbf{D}}_v+\frac{1}{\epsilon_1}\bar{\Delta}_v\|\tilde{\mathbf{D}}_v\|\\
    \leq& |\mathcal{H}\mathbf{e}_3|-\lambda_{\min}(\mathcal{H}\mathbf{K}_2-\mathbf{K}_1)|\mathbf{e}_2|+\lambda_{\max}(\mathbf{K}_1^2)|\mathbf{e}_1|+\\
    &\frac{1}{2}\kappa_1\|\mathbf{D}_v\|^2-\frac{1}{2}\kappa_1\|\tilde{\mathbf{D}}_v\|^2+\frac{1}{2\epsilon_1}\bar{\Delta}_v^2+\frac{1}{2\epsilon_1}\|\tilde{\mathbf{D}}_v\|^2\\
    \leq& -\min\{\lambda_{\min}(\mathcal{H}\mathbf{K}_2-\mathbf{K}_1),\epsilon_1\kappa_1-1\}\Big(|\mathbf{e}_2|+\frac{1}{2\epsilon_1}\|\tilde{\mathbf{D}}_v\|^2\Big)\\
    &+|\mathcal{H}\mathbf{e}_3|+\lambda_{\max}(\mathbf{K}_1^2)|\mathbf{e}_1|+\frac{1}{2}\kappa_1\Delta_v^2+\frac{1}{2\epsilon_1}\bar{\Delta}_v^2.
\end{align*}

\paragraph*{\textbf{Step 3}}  Recall that $\mathbf{e}_3=\mathbf{e}_a-\mathbf{e}_a^*$, where the virtual controller is as \eqref{Eq:2nd_virtual_law}. Consequently,
\begin{align*}
    &\dot{\mathbf{e}}_3=\dot{\mathbf{e}}_a-\eta\mathbf{e}_a^*+\eta\mathbf{e}_a^*-\dot{\mathbf{e}}_a^*\\
        &=-\mathbf{F}-\mathbf{G}\mathbf{u}+\mathbf{D_a}+\eta(\mathbf{K}_2\mathbf{e}_2+\hat{\mathbf{D}}_v)+(\eta\mathbf{e}_a^*-\dot{\mathbf{e}}_a^*),
\end{align*}
where $\eta>0$ is a scalar parameter for the filter; the intermediate virtual controller $\mathbf{e}_a^*$ is introduced to use the information in Step 2 so as to avoid differentiation.

Choose the third Lyapunov function as $V_3:=|\mathcal{H}\mathbf{e}_3|+\frac{1}{2\epsilon_2}\tilde{\mathbf{D}}_a^T\mathcal{H}\tilde{\mathbf{D}}_a$ (recall that $\mathcal{H}>0$), where the estimation error $\tilde{\mathbf{D}}_a=\mathbf{D}_a-\hat{\mathbf{D}}_a$, then:
\begin{align}\label{Eq:V3_dot}
    \dot{V}_3=&\mbox{sgn}^T(\mathcal{H}\mathbf{e}_3)\mathcal{H}(-\mathbf{F}-\mathbf{G}\mathbf{u}+\mathbf{D}_a+\eta(\mathbf{K}_2\mathbf{e}_2+ \nonumber      \\
    &\hat{\mathbf{D}}_v))+\frac{1}{\epsilon_2}\left(\dot{\mathbf{D}}_a-\dot{\hat{\mathbf{D}}}_a\right)^T\mathcal{H}\tilde{\mathbf{D}}_a+\Delta^*,
\end{align}
where $\Delta^*:=\max_{t\geq 0}|\mathcal{H}(\eta\mathbf{e}_a^*-\dot{\mathbf{e}}_a^*)|$ with $\Delta^*<\infty$, since $(\eta\mathbf{e}_a^*-\dot{\mathbf{e}}_a^*)\in\mathcal{L}_{\infty}$ holds \cite{liu2021distributed}.

Design the actual control law as follows:
\begin{equation}\label{Eq:actual_law}
    \mathbf{u}=\mathbf{G}^{-1}(-\mathbf{F}+\mathbf{K}_3\mathcal{H}\mathbf{e}_3+\eta\hat{\mathbf{D}}_v+\hat{\mathbf{D}}_a),
\end{equation}
where $\mathbf{K}_3>0$ is a diagonal control parameter matrix.

Then, substitute \eqref{Eq:actual_law} into \eqref{Eq:V3_dot} and obtain:
\begin{align}\label{Eq:V3_dot1}
    &\dot{V}_3\leq \mbox{sgn}^T(\mathcal{H}\mathbf{e}_3)\mathcal{H}(-\mathbf{K}_3\mathcal{H}\mathbf{e}_3+\tilde{\mathbf{D}}_a)+\eta|\mathbf{K}_2\mathbf{e}_2|-  \nonumber  \\
    &\frac{1}{\epsilon_2}\dot{\hat{\mathbf{D}}}_a^T\mathcal{H}\tilde{\mathbf{D}}_a+\frac{1}{\epsilon_2}\dot{\mathbf{D}}_a^T\mathcal{H}^{\frac{1}{2}}\mathcal{H}^{\frac{1}{2}}\tilde{\mathbf{D}}_a+\Delta^*   \nonumber      \\
    &\leq -\lambda_{\min}(\mathcal{H}\mathbf{K}_3)|\mathcal{H}\mathbf{e}_3|+\eta\lambda_{\max}(\mathbf{K}_2)|\mathbf{e}_2|+\frac{1}{2\epsilon_2}\lambda_{\max}(\mathcal{H})\bar{\Delta}_a^2+  \nonumber   \\
    &\Delta^*+\frac{1}{2\epsilon_2}\tilde{\mathbf{D}}_a^T\mathcal{H}\tilde{\mathbf{D}}_a+\left(\mbox{sgn}^T(\mathcal{H}\mathbf{e}_3)-\frac{1}{\epsilon_2}\dot{\hat{\mathbf{D}}}_a^T\right)\mathcal{H}\tilde{\mathbf{D}}_a.
\end{align}

Design the adaptive law to get $\hat{\mathbf{D}}_a$ as:
\begin{equation}\label{Eq:Da_hat}
    \dot{\hat{\mathbf{D}}}_a=-\epsilon_2\kappa_2\hat{\mathbf{D}}_a+\epsilon_2\mbox{sgn}(\mathcal{H}\mathbf{e}_3),
\end{equation}
where $\epsilon_2$, $\kappa_2>0$ are scalar control parameters.

Substitute \eqref{Eq:Da_hat} into \eqref{Eq:V3_dot1} to obtain:
\begin{align*}
    &\dot{V}_3\leq -\lambda_{\min}(\mathcal{H}\mathbf{K}_3)|\mathcal{H}\mathbf{e}_3|+\eta\lambda_{\max}(\mathbf{K}_2)|\mathbf{e}_2|+\\
    &\frac{1}{2\epsilon_2}\lambda_{\max}(\mathcal{H})\bar{\Delta}_a^2+\frac{1}{2\epsilon_2}\tilde{\mathbf{D}}_a^T\mathcal{H}\tilde{\mathbf{D}}_a+\kappa_2\hat{\mathbf{D}}_a^T\mathcal{H}^{\frac{1}{2}}\mathcal{H}^{\frac{1}{2}}\tilde{\mathbf{D}}_a+\Delta^*\\
    &\leq -\lambda_{\min}(\mathcal{H}\mathbf{K}_3)|\mathcal{H}\mathbf{e}_3|+\eta\lambda_{\max}(\mathbf{K}_2)|\mathbf{e}_2|+\frac{1}{2\epsilon_2}\lambda_{\max}(\mathcal{H})\bar{\Delta}_a^2\\
    &+\frac{1}{2\epsilon_2}\tilde{\mathbf{D}}_a^T\mathcal{H}\tilde{\mathbf{D}}_a+\frac{1}{2}\kappa_2\lambda_{\max}(\mathcal{H})\|\mathbf{D}_a\|^2-\frac{1}{2}\kappa_2\tilde{\mathbf{D}}_a^T\mathcal{H}\tilde{\mathbf{D}}_a+\Delta^*\\
    &\leq -\min\{\lambda_{\min}(\mathcal{H}\mathbf{K}_3),\epsilon_2\kappa_2-1\}\Big(|\mathcal{H}\mathbf{e}_3|+\frac{1}{2\epsilon_2}\tilde{\mathbf{D}}_a^T\mathcal{H}\tilde{\mathbf{D}}_a\Big)+\\
    &\eta\lambda_{\max}(\mathbf{K}_2)|\mathbf{e}_2|+\frac{1}{2}\kappa_2\lambda_{\max}(\mathcal{H})\Delta_a^2+\frac{1}{2\epsilon_2}\lambda_{\max}(\mathcal{H})\bar{\Delta}_a^2+\Delta^*.
\end{align*}

\begin{remark}
    In the above controller design, the actual control law \eqref{Eq:actual_law} and the adaptive laws \eqref{Eq:Dv_hat} and \eqref{Eq:Da_hat} involve terms that are linear in $\mathcal{H}$. This property promotes the distributed nature of our controller. Note that the $i$\textsuperscript{th} component of \eqref{Eq:Dv_hat}, \eqref{Eq:Da_hat}, \eqref{Eq:actual_law} for any $i\in\mathcal{I}_N$ are:
    \begin{align*}
    \dot{\hat{D}}_{vi}=&-\epsilon_1\kappa_1\hat{D}_{vi}+\frac{\epsilon_1}{|\mathbf{e}_2|}\Big(P_ie_{2i}+\sum_{j\in\mathcal{I}_N\backslash\{i\}}A_{ij}(e_{2i}-e_{2j})\Big),    \\
    \dot{\hat{D}}_{ai}=&-\epsilon_2\kappa_2\hat{D}_{ai}+\frac{\epsilon_2}{|\mathcal{H}\mathbf{e}_3|}\Big(P_ie_{3i}+\sum_{j\in\mathcal{I}_N\backslash\{i\}}A_{ij}(e_{3i}-e_{3j})\Big),   \\
    u_i=&g_i^{-1}\Big(-f_i+K_{3i}(P_ie_{3i}+\sum_{j\in\mathcal{I}_N\backslash\{i\}}A_{ij}(e_{3i}-e_{3j}))+    \\
    &\eta\hat{D}_{vi}+\hat{D}_{ai}\Big),
\end{align*}
respectively. From these expressions, it is clear that the actual control law \eqref{Eq:actual_law} and the adaptive laws \eqref{Eq:Dv_hat} and \eqref{Eq:Da_hat} can in fact be implemented in a distributed manner when each agent (follower vehicle) is provided with the information $|\mathbf{e}_2|$ and $|\mathcal{H}\mathbf{e}_3|$ (e.g., via some centralized external entities or some decentrally executed distributed consensus schemes).
\end{remark}

\begin{remark}
    Since our controller is designed based on vector Lyapunov function-based analysis, the controller for each vehicle can be stacked into a vector form. As are observed in the actual control law \eqref{Eq:actual_law} and the adaptive laws \eqref{Eq:Dv_hat} and \eqref{Eq:Da_hat}, these equations are still applicable as long as the topology matrix $\mathcal{H}$ is given without redesigning the controllers for the entire platoon. This indicates the compositionality of our controller and highlights the advantage of using our controller compared with traditional scalar Lyapunov function-based backstepping designs \cite{zhu2018distributed,zhu2020v2v,liu2021distributed}.

    
\end{remark}

\subsection{Stability Analysis Under VSLF}

\begin{theorem}\label{theorem:controller}
    For the VSLF of the form $V:=[V_1,V_2,V_3]^T$, where $V_1 := |\mathbf{e}_1|$, $V_2 := |\mathbf{e}_2|+\frac{1}{2\epsilon_1}\|\tilde{\mathbf{D}}_v\|^2$ and $V_3:=|\mathcal{H}\mathbf{e}_3|+\frac{1}{2\epsilon_2}\tilde{\mathbf{D}}_a^T\mathcal{H}\tilde{\mathbf{D}}_a$, the internal stability and $l_2$ weak string stability of the platoon error dynamics \eqref{Eq:platoon_error_vector} can be guaranteed under the distributed adaptive backstepping controller \eqref{Eq:actual_law} with virtual controllers \eqref{Eq:1st_virtual_law} and \eqref{Eq:2nd_virtual_law} and adaptive laws \eqref{Eq:Dv_hat} and \eqref{Eq:Da_hat}.
\end{theorem}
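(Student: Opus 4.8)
The plan is to verify that the proposed vector function $V=[V_1,V_2,V_3]^T$ satisfies all the requirements of Definition \ref{def:VSLF}, i.e., that it is a genuine VSLF; once this is done, $l_2$ weak string stability is immediate from Theorem \ref{theorem:VSS_DSS}, and internal stability (in the sense of uniform boundedness of all closed-loop signals) will be read off from the ISS of the associated comparison system together with the comparison principle.

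First I would assemble the three Dini-derivative estimates produced in Steps 1--3 and rewrite each right-hand side solely in terms of $V_1,V_2,V_3$. Since $V_1=|\mathbf{e}_1|$, and since the estimation-error terms in $V_2,V_3$ are non-negative so that $|\mathbf{e}_2|\leq V_2$ and $|\mathcal{H}\mathbf{e}_3|\leq V_3$, the three bounds collapse to $D^+V_1\leq-\lambda_{\min}(\mathbf{K}_1)V_1+V_2$, $D^+V_2\leq\lambda_{\max}(\mathbf{K}_1^2)V_1-c_2V_2+V_3+\gamma_2$, and $D^+V_3\leq\eta\lambda_{\max}(\mathbf{K}_2)V_2-c_3V_3+\gamma_3$, where $c_2:=\min\{\lambda_{\min}(\mathcal{H}\mathbf{K}_2-\mathbf{K}_1),\epsilon_1\kappa_1-1\}$ and $c_3:=\min\{\lambda_{\min}(\mathcal{H}\mathbf{K}_3),\epsilon_2\kappa_2-1\}$, and the non-negative scalars $\gamma_2,\gamma_3$ gather the residual terms $\tfrac12\kappa_1\Delta_v^2+\tfrac1{2\epsilon_1}\bar\Delta_v^2$ and $\tfrac12\kappa_2\lambda_{\max}(\mathcal{H})\Delta_a^2+\tfrac1{2\epsilon_2}\lambda_{\max}(\mathcal{H})\bar\Delta_a^2+\Delta^*$. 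Stacking gives precisely \eqref{def:vector_dissipativity_condition}, $D^+V\preceq\Gamma V+\gamma(\|\boldsymbol{\omega}\|)$, with the constant matrix
\[
\Gamma=\begin{bmatrix}-\lambda_{\min}(\mathbf{K}_1)&1&0\\ \lambda_{\max}(\mathbf{K}_1^2)&-c_2&1\\ 0&\eta\lambda_{\max}(\mathbf{K}_2)&-c_3\end{bmatrix}.
\]
All off-diagonal entries of $\Gamma$ are non-negative, so $\Gamma(\mathbf{z})=\Gamma\mathbf{z}$ is of class-$\mathcal{W}$ and the comparison system \eqref{Eq:comparison_system} is well posed in the positive orthant; moreover, as noted for linear comparison systems, the required ISS reduces to $\Gamma$ being Hurwitz.

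The main obstacle is establishing that the control parameters can be selected so that this $\Gamma$ is Hurwitz. Because $\Gamma$ is Metzler, this is equivalent to $-\Gamma$ being a nonsingular M-matrix, i.e., to the positivity of its three leading principal minors: $\lambda_{\min}(\mathbf{K}_1)>0$, $\lambda_{\min}(\mathbf{K}_1)c_2>\lambda_{\max}(\mathbf{K}_1^2)$, and $\lambda_{\min}(\mathbf{K}_1)c_2c_3>\lambda_{\min}(\mathbf{K}_1)\eta\lambda_{\max}(\mathbf{K}_2)+\lambda_{\max}(\mathbf{K}_1^2)c_3$. I would argue these are simultaneously achievable by fixing a small diagonal $\mathbf{K}_1$, enforcing $\epsilon_1\kappa_1>1$ and $\epsilon_2\kappa_2>1$, and then taking $\mathbf{K}_2,\mathbf{K}_3$ large enough that $c_2,c_3$ dominate while choosing the filter gain $\eta$ small enough to suppress the $(3,2)$ coupling $\eta\lambda_{\max}(\mathbf{K}_2)$. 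Disentangling the fact that $\mathbf{K}_2$ drives both $c_2$ and this coupling term simultaneously is the delicate part of the argument.

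Finally I would confirm the sandwich condition \eqref{def:state_error_bound}. For the lower bound, picking $\boldsymbol{\mu}>0$ and using $V_1=|\mathbf{e}_1|$, $V_2\geq|\mathbf{e}_2|$ and $V_3\geq\lambda_{\min}(\mathcal{H})|\mathbf{e}_3|$ gives $\sum_k\mu_kV_k\geq\underline{\alpha}(|\mathbf{e}|)$ after accounting for the (boundedly invertible) change of coordinates relating $\mathbf{e}$ to $(\mathbf{e}_1,\mathbf{e}_2,\mathbf{e}_3)$; for the upper bound I would use the boundedness of the adaptive estimation errors $\tilde{\mathbf{D}}_v,\tilde{\mathbf{D}}_a$ guaranteed by the adaptive laws \eqref{Eq:Dv_hat} and \eqref{Eq:Da_hat}. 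With \eqref{def:state_error_bound}, \eqref{def:vector_dissipativity_condition}, and the ISS of \eqref{Eq:comparison_system} in hand, $V$ is a VSLF, so Theorem \ref{theorem:VSS_DSS} yields $l_2$ weak string stability. Internal stability then follows since the comparison principle bounds $\mathbf{z}(t)$, hence $V(t)$, uniformly in $t$, which in turn bounds $\mathbf{e}_1,\mathbf{e}_2,\mathbf{e}_3$ and the estimation errors, so that all closed-loop signals remain uniformly bounded.
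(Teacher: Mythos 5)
Your proposal follows the paper's route almost exactly: stack the three backstepping Dini-derivative estimates into $D^+V\preceq\Gamma V+\mathbf{b}$ with the same Metzler matrix $\Gamma$, take the linear comparison system $\dot{\mathbf{z}}=\Gamma\mathbf{z}+\mathbf{b}$, use Hurwitzness of $\Gamma$ for ISS, and conclude boundedness of $\mathbf{e}_1,\mathbf{e}_2,\mathbf{e}_3$ (internal stability) plus $l_2$ weak string stability via Theorem \ref{theorem:VSS_DSS}. Your M-matrix/leading-principal-minor analysis of when $\Gamma$ can actually be made Hurwitz is a genuine addition --- the paper simply states that ``$\Gamma$ is selected as a Hurwitz matrix'' --- and your observation that $\mathbf{K}_2$ enters both $c_2$ and the $(3,2)$ coupling, requiring small $\eta$, is exactly the right thing to pin down.

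The one step that does not go through as written is your verification of the sandwich condition \eqref{def:state_error_bound}. You propose to obtain the upper bound by ``using the boundedness of the adaptive estimation errors,'' but \eqref{def:state_error_bound} demands a class-$\mathcal{K}_{\infty}$ function $\overline{\alpha}(|\mathbf{e}|)$; since $V_2$ and $V_3$ contain the terms $\frac{1}{2\epsilon_1}\|\tilde{\mathbf{D}}_v\|^2$ and $\frac{1}{2\epsilon_2}\tilde{\mathbf{D}}_a^T\mathcal{H}\tilde{\mathbf{D}}_a$, which are generically positive even when $\mathbf{e}=0$, no such $\overline{\alpha}$ with $\overline{\alpha}(0)=0$ can exist. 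A bounded additive constant is not the same as a $\mathcal{K}_{\infty}$ majorant. The paper acknowledges precisely this in the remark following the theorem and resolves it not by your route but by relaxing the definition to an upper bound of the form $\overline{\alpha}_1(|\mathbf{e}|)+\overline{\alpha}_2(\|\tilde{\mathbf{D}}\|)$ and re-running the argument of Theorem \ref{theorem:VSS_DSS} with a correspondingly modified comparison function. Either adopt that modified definition explicitly, or restrict the claim of \eqref{def:state_error_bound} to the disturbance-free robust design with $V_2=|\mathbf{e}_2|$, $V_3=|\mathcal{H}\mathbf{e}_3|$; as it stands, your upper-bound step would fail. The remainder of your argument, including the lower bound via $V_3\geq\lambda_{\min}(\mathcal{H})|\mathbf{e}_3|$ and the invertible change of coordinates from $(\mathbf{e}_x,\mathbf{e}_v,\mathbf{e}_a)$ to $(\mathbf{e}_1,\mathbf{e}_2,\mathbf{e}_3)$, is sound.
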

\begin{proof}
    From the proposed backstepping controller design, we can summarize the following relationships:
\begin{align}
    \dot{V}_1\leq&-\lambda_{\min}(\mathbf{K}_1)V_1+V_2,    \nonumber\\
        \dot{V}_2\leq& -\min\{\lambda_{\min}(\mathcal{H}\mathbf{K}_2-\mathbf{K}_1),\epsilon_1\kappa_1-1\}V_2+V_3       \nonumber            \\        &+\lambda_{\max}(\mathbf{K}_1^2)V_1+\frac{1}{2}\kappa_1\Delta_v^2+\frac{1}{2\epsilon_1}\bar{\Delta}_v^2,   \nonumber \\
        \dot{V}_3\leq& -\min\{\lambda_{\min}(\mathcal{H}\mathbf{K}_3),\epsilon_2\kappa_2-1\}V_3+\eta\lambda_{\max}(\mathbf{K}_2)V_2   \nonumber\\
        &+\frac{1}{2}\kappa_2\lambda_{\max}(\mathcal{H})\Delta_a^2+\frac{1}{2\epsilon_2}\lambda_{\max}(\mathcal{H})\bar{\Delta}_a^2+\Delta^*.
\end{align}

Thus, we can select the VSLF as $V:=[V_1,V_2,V_3]^T$. Then,
\begin{equation}
    \begin{split}
        \dot{V}\preceq&\ \Gamma V+\mathbf{b},
    \end{split}
\end{equation}
where $\Gamma\in\mathbb{R}^{3\times 3}$ and $\mathbf{b}\in\mathbb{R}_+^3$ are defined respectively as:
\begin{gather*}
    \Gamma:= \scriptsize\left[\begin{array}{ccc}
            -\lambda_{\min}(\mathbf{K}_1) & 1 & 0 \\
            \lambda_{\max}(\mathbf{K}_1^2) & -\lambda_{\min}(\mathcal{H}\mathbf{K}_2-\mathbf{K}_1) & 1 \\
            0 & \eta\lambda_{\max}(\mathbf{K}_2) & -\lambda_{\min}(\mathcal{H}\mathbf{K}_3)
        \end{array}\right],       \\
    \mathbf{b}:= \scriptsize\left[b_1,b_2,b_3\right]^T,
\end{gather*}
where $\Gamma$ is selected as a Hurwitz matrix; $b_1:=0$, $b_2:= \frac{1}{2}\kappa_1\Delta_v^2+\frac{1}{2\epsilon_1}\bar{\Delta}_v^2$, $b_3:=\frac{1}{2}\kappa_2\lambda_{\max}(\mathcal{H})\Delta_a^2+\frac{1}{2\epsilon_2}\lambda_{\max}(\mathcal{H})\bar{\Delta}_a^2+\Delta^*$, under the conditions:
\begin{gather}
    \lambda_{\min}(\mathcal{H}\mathbf{K}_2-\mathbf{K}_1)\leq\epsilon_1\kappa_1-1, \\
    \lambda_{\min}(\mathcal{H}\mathbf{K}_3)\leq\epsilon_2\kappa_2-1.
\end{gather}

Besides, we can choose a comparison system as:
\begin{equation}\label{Eq:comparison_system_z}
    \dot{\mathbf{z}}=\Gamma\mathbf{z}+\mathbf{b}.
\end{equation}

Since the matrix $\Gamma$ is Hurwitz, we know that the system \eqref{Eq:comparison_system_z} is ISS from the input $\mathbf{b}$ to the state $\mathbf{z}$. Moreover, if the initial condition satisfies $\mathbf{z}(0)=V(0)$, we have $V\preceq \mathbf{z}$, for all $t\geq 0$. So the errors $\mathbf{e}_1$, $\mathbf{e}_2$ and $\mathbf{e}_3$ are all bounded, i.e., $\mathbf{e}_1$, $\mathbf{e}_2$, $\mathbf{e}_3\in\mathcal{L}_{\infty}$, for all $t\geq 0$. Since $\mathbf{z}$ is ISS from $\mathbf{b}$ to $\mathbf{z}$, it implies that the closed loop system under our designed controller is ISS from the input $(\Delta_v,\bar{\Delta}_v,\Delta_a,\bar{\Delta}_a,\Delta^*)$ to the states $\mathbf{e}_1$, $\mathbf{e}_2$ and $\mathbf{e}_3$. This implies the internal stability of the platoon. Furthermore, the conditions of our proposed VSLF notion in Def. \ref{def:VSLF} are satisfied. Thus, the $l_2$ weak string stability is satisfied. 
\end{proof}

\begin{remark}
    It may seem that the three Lyapunov functions that we chose in the controller design violate the condition \eqref{def:state_error_bound} of the Def. \ref{def:VSLF}, since we may not find a such class-$\mathcal{K}_{\infty}$ function $\overline{\alpha}(\cdot)$. This is because our definition of VSLF is provided in a general robust form without considering the estimation errors for the disturbances. If we just design a robust platoon controller without involving the adaptive laws \eqref{Eq:Dv_hat} and \eqref{Eq:Da_hat}, then we can simply select those three Lyapunov functions as $V_1=|\mathbf{e}_1|$, $V_2=|\mathbf{e}_2|$ and $V_3=|\mathcal{H}\mathbf{e}_3|$, respectively. Therefore, it is readily seen that they satisfy the condition \eqref{def:state_error_bound} of the Def. \ref{def:VSLF}. For the case that we involve the adaptive laws \eqref{Eq:Dv_hat} and \eqref{Eq:Da_hat}, we can simply modify the condition \eqref{def:state_error_bound} by finding an upper bound as $\overline{\alpha}_1(|\mathbf{e}|)+\overline{\alpha}_2(\|\tilde{\mathbf{D}}\|)$, where $\tilde{\mathbf{D}}$ is the lumped estimation errors. It can be easily seen from the proof of Theorem \ref{theorem:VSS_DSS} that this modified form of VSLF also implies $l_2$ weak string stability with a different comparison function in \eqref{Eq:ISSS_condition}.
\end{remark}

\section{Simulation Examples}\label{sec:simulation}

In this section, we provide simulation examples for the control of a platoon to verify the effectiveness of our proposed controller. We consider a platoon with four homogeneous vehicles following a leader and the dynamics of each vehicle in the platoon is of the form \eqref{Eq:platoon_longitudinal} with parameters as $m_i=1500\ \mbox{kg}$, $\tau_i=0.25\ \mbox{s}$, $A_{f,i}=2.2\ \mbox{m}^2$, $\rho=0.78\ \mbox{kg/m}^3$, $C_{d,i}=0.35$ and $C_{r,i}=0.067$.

The dynamics of the leader is of the form \eqref{Eq:leader_dynamics} and it is assumed to generate a reference velocity profile $v_0$ as:
\begin{equation}
    v_0: \left \{\begin{array}{ll}
        15\ \mbox{m/s}, & 0\leq t\leq 5\ \mbox{s}\\
        (2t+5)\ \mbox{m/s}, & 5\leq t\leq 10\ \mbox{s}\\
        25\ \mbox{m/s}, & 10\leq t\leq 15\ \mbox{s}\\
        (-t+40)\ \mbox{m/s}, & 15\leq t\leq 20\ \mbox{s}\\
        20\ \mbox{m/s}. & 20\leq t\leq 30\ \mbox{s}
        \end{array}\right.
\end{equation}
The initial position of the leading vehicle is assumed to be $x_0(0)=20\ \mbox{m}$, the initial conditions of the following vehicles are assumed to be $\mathbf{x}(0)=[15,\ 10,\ 5,\ 0]^T \ \mbox{m}$, $\mathbf{v}(0)=\mathbf{a}(0)=[0,\ 0,\ 0,\ 0]^T$.

To illustrate the effectiveness of the proposed control method in this paper, we compare our simulation results with those of the work in \cite{wang2020finitetime} which also considers the mismatched disturbances in the vehicle dynamics and uses a compositional and distributed controller. In particular, we compare two types of information flow topologies for the platoon as shown in Fig. \ref{fig:topologies_in_simulation}, i.e., bidirectional-leader type (denoted as bl-type) and bidirectional type (denoted as b-type) \cite{feng2019string}. Thus, the Laplacian matrix and the pinning matrix are respectively as follows:
\begin{gather}
    \mathcal{L}_{bl}=\begin{bmatrix}
    1 & -1 & 0 & 0 \\
    -1 & 2 & -1 & 0 \\
    0 & -1 & 2 & -1 \\
    0 & 0 & -1 & 1
    \end{bmatrix},\ \ 
    \mathcal{P}_{bl}=\mathbf{I}_{4\times 4}.    \\
    \mathcal{L}_b=\begin{bmatrix}
    1 & -1 & 0 & 0 \\
    -1 & 2 & -1 & 0 \\
    0 & -1 & 2 & -1 \\
    0 & 0 & -1 & 1
    \end{bmatrix},\ \
    \mathcal{P}_b=\mbox{diag}\big\{[1,\ 0,\ 0,\ 0]\big\}.
\end{gather}

\begin{figure}[!t]
    \centering
    \includegraphics[width=0.9\linewidth]{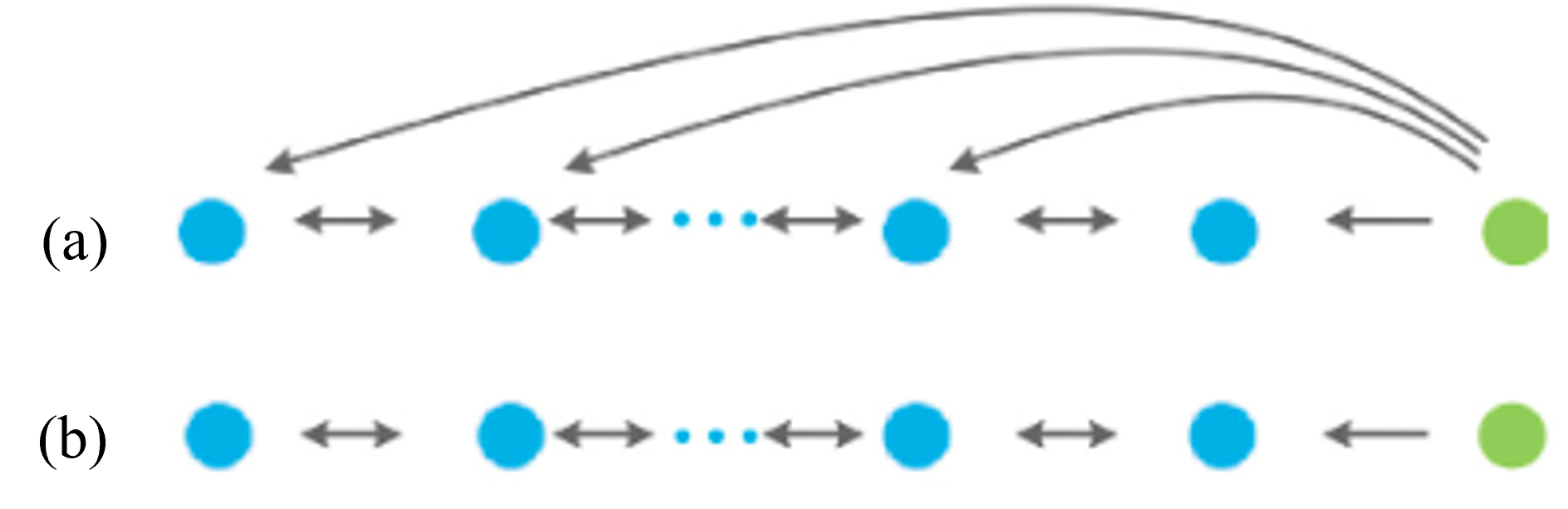}
    \caption{ (a) Bidirectional-leader type (bl-type) information flow topology; (b) bidirectional type (b-type) information flow topology.}
    \label{fig:topologies_in_simulation}
\end{figure}

Besides, the parameters of the proposed controller \eqref{Eq:1st_virtual_law}, \eqref{Eq:2nd_virtual_law} and \eqref{Eq:actual_law} are selected as  $\mathbf{K}_1=1.5\mathbf{I}_{4\times 4}$, $\mathbf{K}_2=10\mathbf{I}_{4\times 4}$, $\mathbf{K}_3=50\mathbf{I}_{4\times 4}$, $\epsilon_1=10$, $\epsilon_2=22$, $\kappa_1=\kappa_2=0.5$, $\eta=2$ for the bl-type information flow topology and $\mathbf{K}_1$, $\mathbf{K}_2$, $\mathbf{K}_3$ and $\eta$ are selected as $\mathbf{K}_1=0.6\mathbf{I}_{4\times 4}$, $\mathbf{K}_2=25\mathbf{I}_{4\times 4}$, $\mathbf{K}_3=55\mathbf{I}_{4\times 4}$, $\eta=0.05$ for the b-type information flow topology, respectively.

The initial conditions for the adaptive laws \eqref{Eq:Dv_hat} and \eqref{Eq:Da_hat} are set as $\hat{\mathbf{D}}_v=\hat{\mathbf{D}}_a=[0,\ 0,\ 0,\ 0]^T$. The desired separation between two vehicles and the length of a vehicle are assumed to be $\delta_{di}=3\ \mbox{m}$ and $L_i=2.5\ \mbox{m}$, respectively. Moreover, we compare two types of disturbances, i.e., periodic type and aperiodic type, which are assumed to be  $\mathbf{D}_v=0.3\sin(t)\mathbf{1}_{4\times 1}$, $\mathbf{D}_a=0.2\sin(t)\mathbf{1}_{4\times 1}$ (periodic type) and $\mathbf{D}_v=\mbox{randn}(4,1)$, $\mathbf{D}_a=\mbox{randn}(4,1)$ (aperiodic type), respectively. Here, $\mbox{randn}(4,1)$ represents a Gaussian random noise vector with standard normal distribution.

Under the above parameters, the observed simulation results are shown in Fig. \ref{fig:position_position_errors_sin_b_VSS}-Fig. \ref{fig:velocity_velocity_errors_aper_s_paperB}.
According to these simulation results, it is clear that both control methods can successfully achieve position and velocity tracking. Besides, under b-type information flow topology, the performance of both methods deteriorate in terms of position and velocity tracking compared with the bl-type information flow topology (see from the tracking error plots), since there are less links between the followers and the leader.

We have quantified the overall tracking performance using the standard Root Mean Square (RMS) metric (of the tracking errors) in Table. \ref{tab:comparison_RMS}. From the reported results, it is clear that our method outperforms the method proposed in \cite{wang2020finitetime} in terms of both position and velocity tracking under bl-type information flow topology with sinusoidal disturbances or Gaussian random noise. Under b-type information flow topology, the method proposed in \cite{wang2020finitetime} has smaller position tracking errors while our proposed method has smaller velocity tracking errors. However, for this case, it is obvious from Fig. \ref{fig:position_position_errors_sin_b_VSS}-\ref{fig:velocity_velocity_errors_aper_s_VSS} that the transient and the steady-states behaviors under our proposed method are better than those under the method proposed in \cite{wang2020finitetime}, which are shown in Fig. \ref{fig:position_position_errors_sin_b_paperB}- \ref{fig:velocity_velocity_errors_aper_s_paperB}. Besides, the position tracking errors for our proposed method under b-type information flow topology are indeed comparable with those for the method proposed in \cite{wang2020finitetime}. Therefore, based on the above experiments and comparisons, we can conclude that our proposed method performs better than the method proposed in \cite{wang2020finitetime}, especially with Gaussian random noise.


\begin{figure}[!t]
    \centering
    \includegraphics[width=0.8\linewidth]{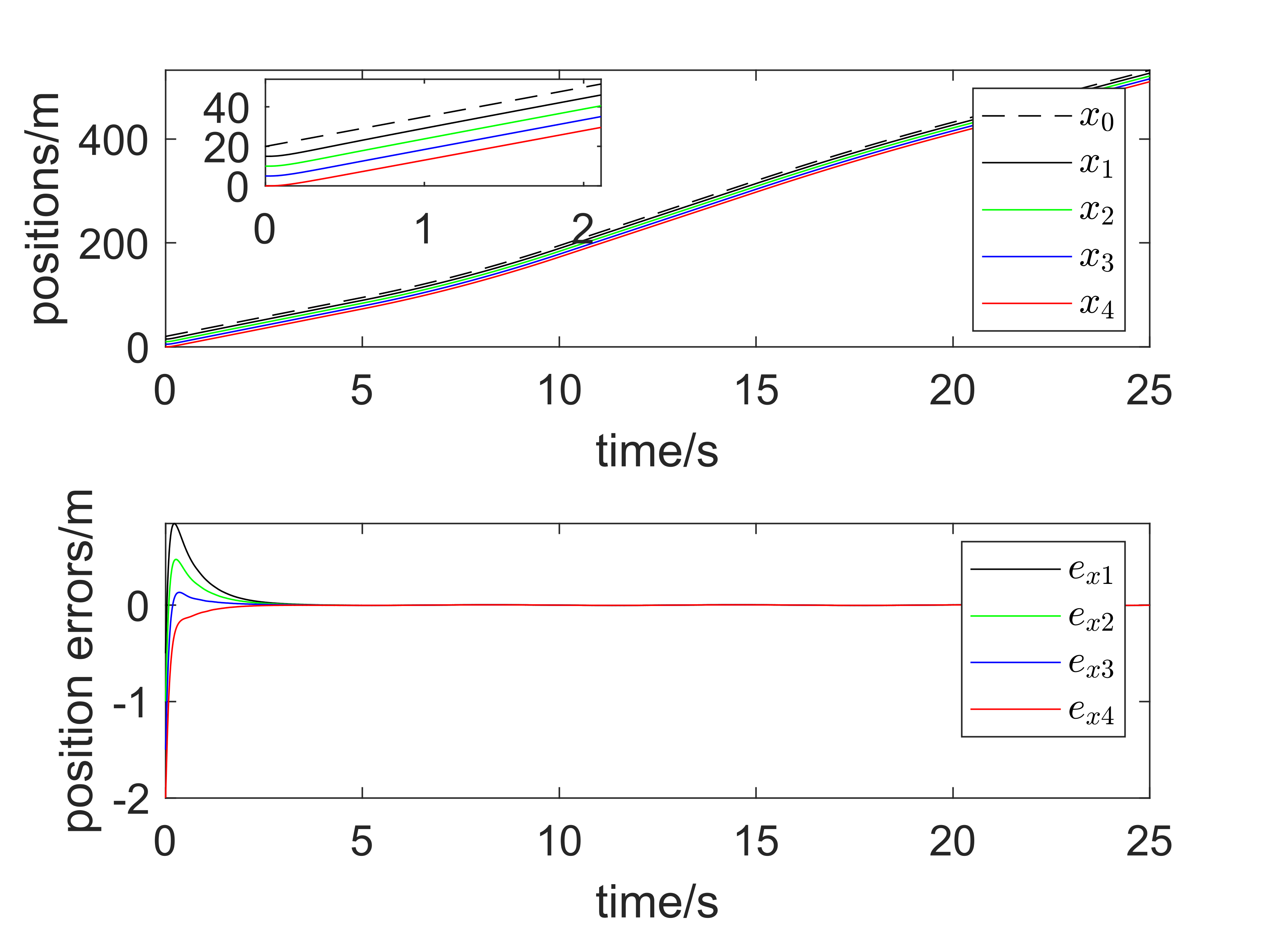}
    \caption{Results observed with our proposed controller under sinusoidal disturbances and  bidirectional-leader type information flow topology: (Top) Vehicle positions and (Bottom) Position tracking errors.}
    \label{fig:position_position_errors_sin_b_VSS}
\end{figure}
\begin{figure}[!t]
    \centering
    \includegraphics[width=0.8\linewidth]{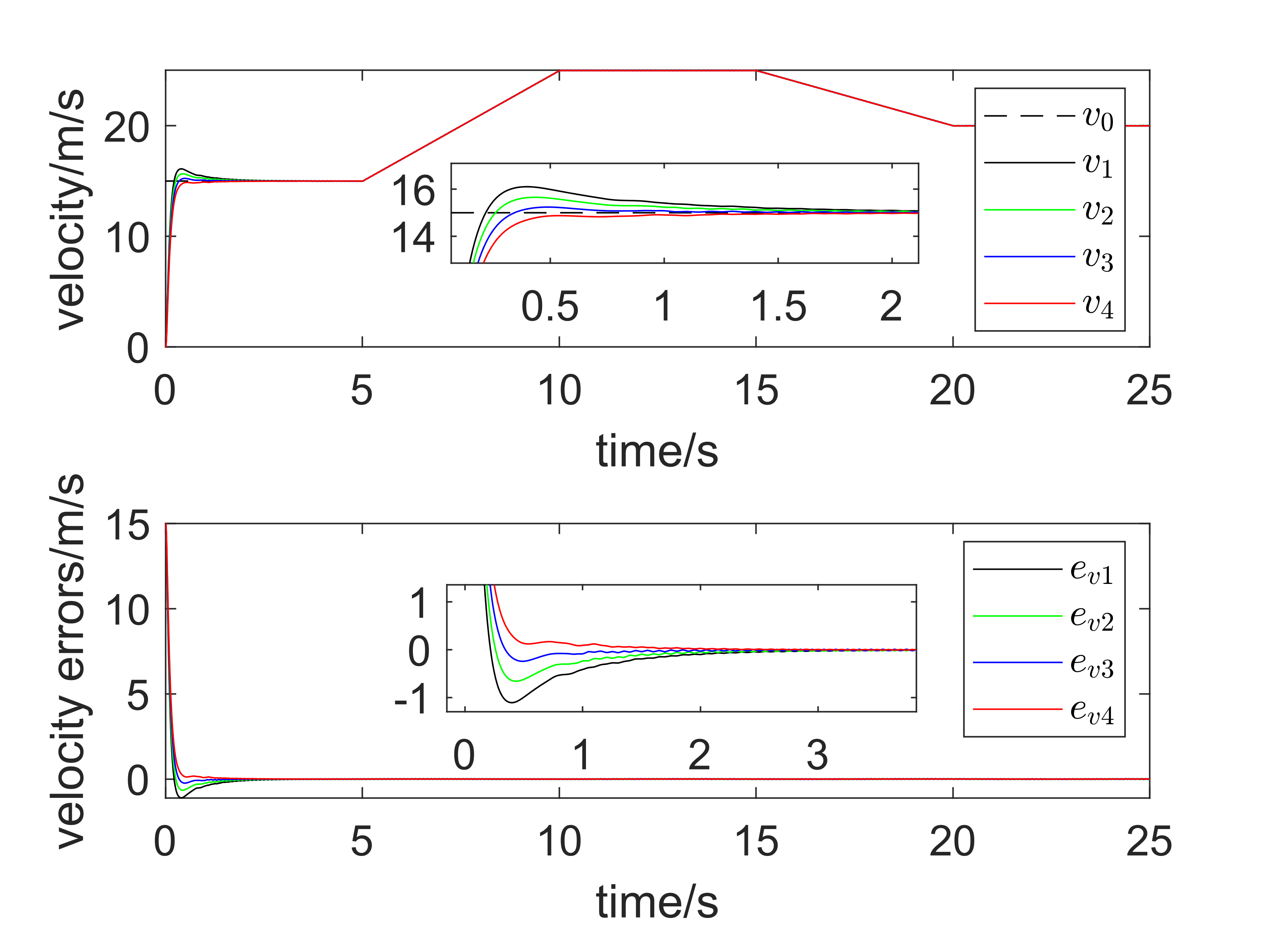}
    \caption{Results observed with our proposed controller under sinusoidal disturbances and  bidirectional-leader type information flow topology: (Top) Vehicle velocities and (Bottom) Velocity tracking errors.}
    \label{fig:velocity_velocity_errors_sin_b_VSS}
\end{figure}

\begin{figure}[!t]
    \centering
    \includegraphics[width=0.8\linewidth]{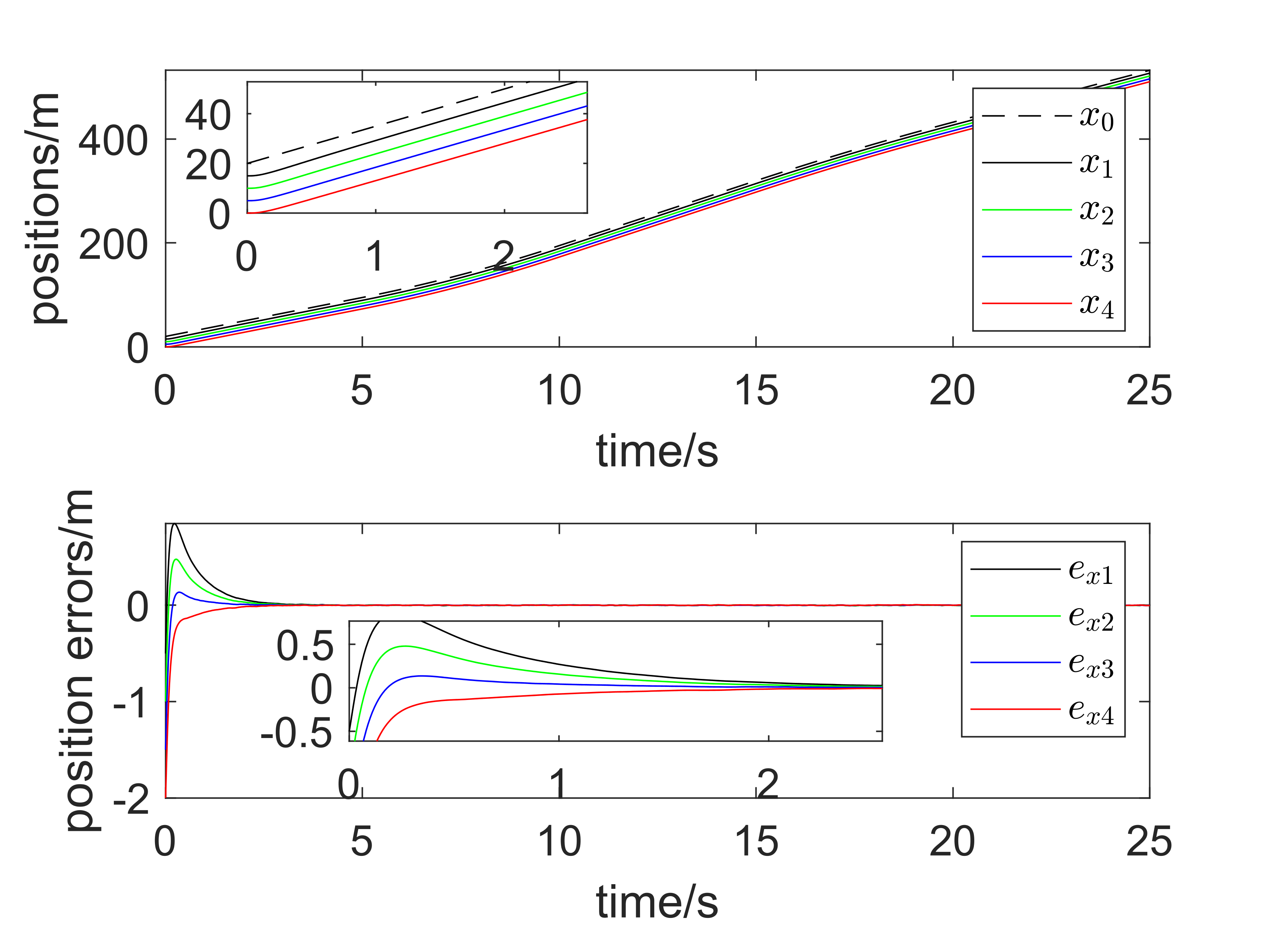}
    \caption{Results observed with our proposed controller under Gaussian random noise and bidirectional-leader type information flow topology: (Top) Vehicle positions and (Bottom) Position tracking errors.}
    \label{fig:position_position_errors_aper_b_VSS}
\end{figure}
\begin{figure}[!t]
    \centering
    \includegraphics[width=0.8\linewidth]{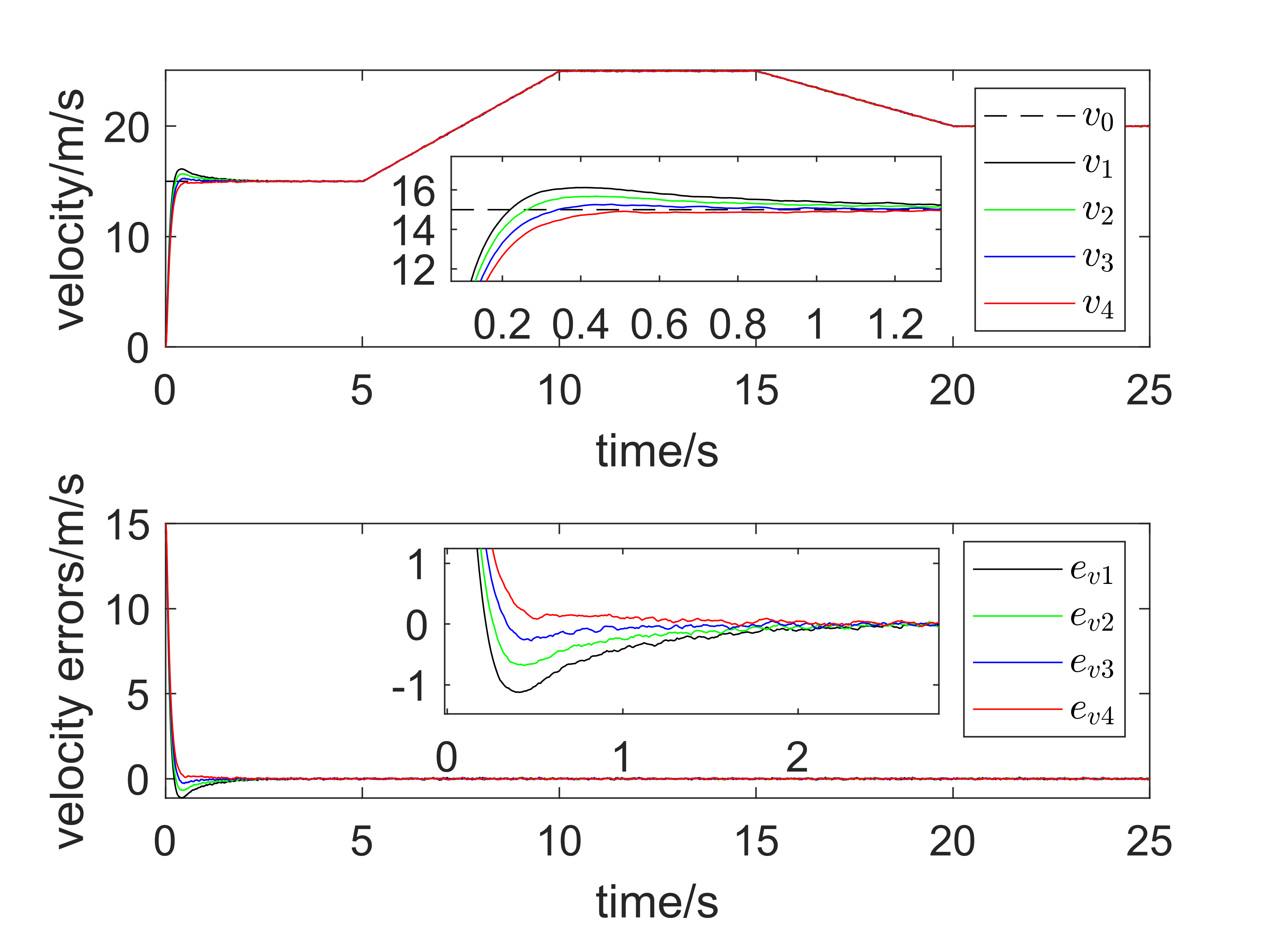}
    \caption{Results observed with our proposed controller under Gaussian random noise and bidirectional-leader type information flow topology: (Top) Vehicle velocities and (Bottom) Velocity tracking errors.}
    \label{fig:velocity_velocity_errors_aper_b_VSS}
\end{figure}

\begin{figure}[!t]
    \centering
    \includegraphics[width=0.8\linewidth]{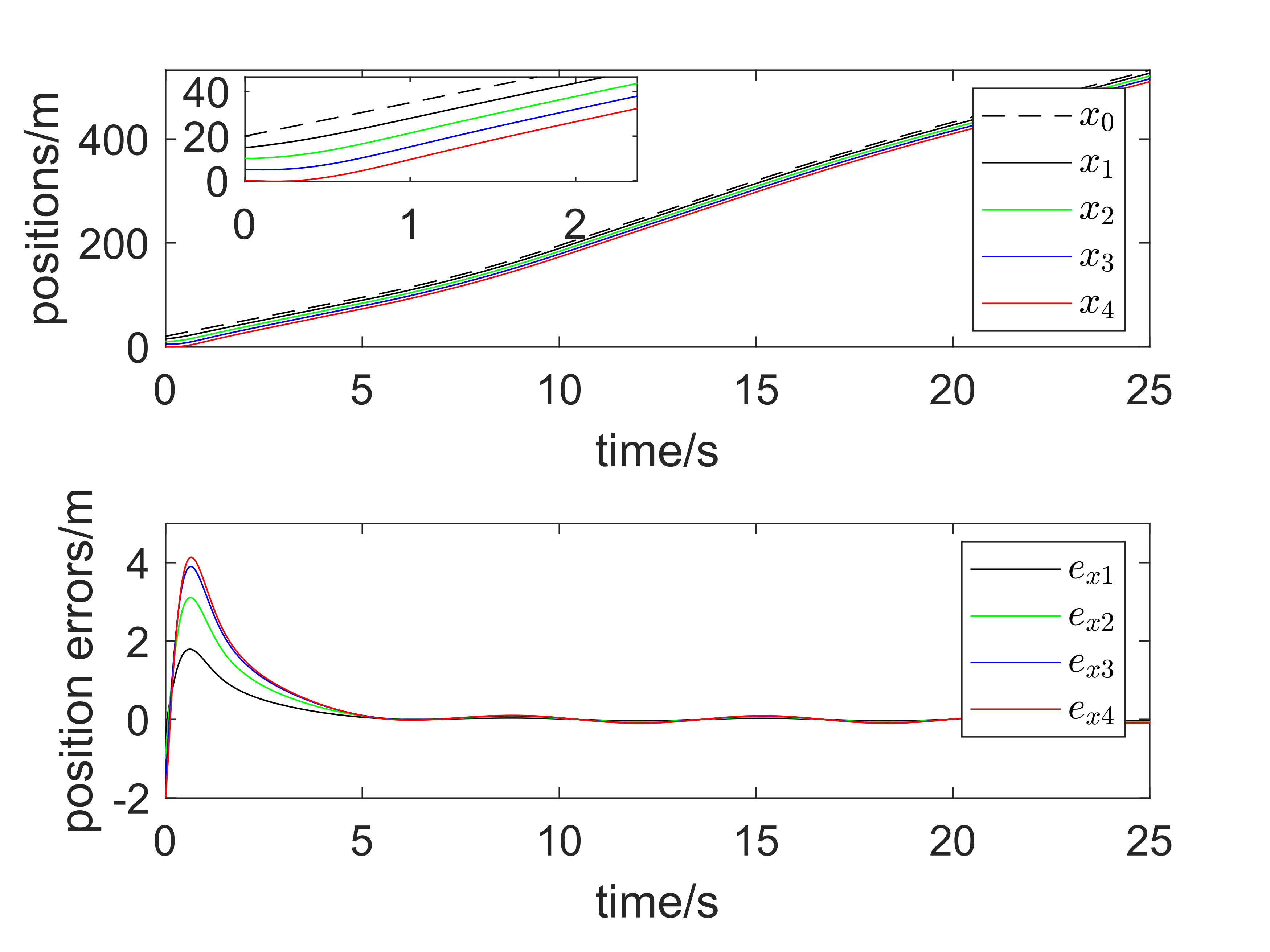}
    \caption{Results observed with our proposed controller under sinusoidal disturbances and bidirectional type information flow topology: (Top) Vehicle positions and (Bottom) Position tracking errors.}
    \label{fig:position_position_errors_sin_s_VSS}
\end{figure}
\begin{figure}[!t]
    \centering
    \includegraphics[width=0.8\linewidth]{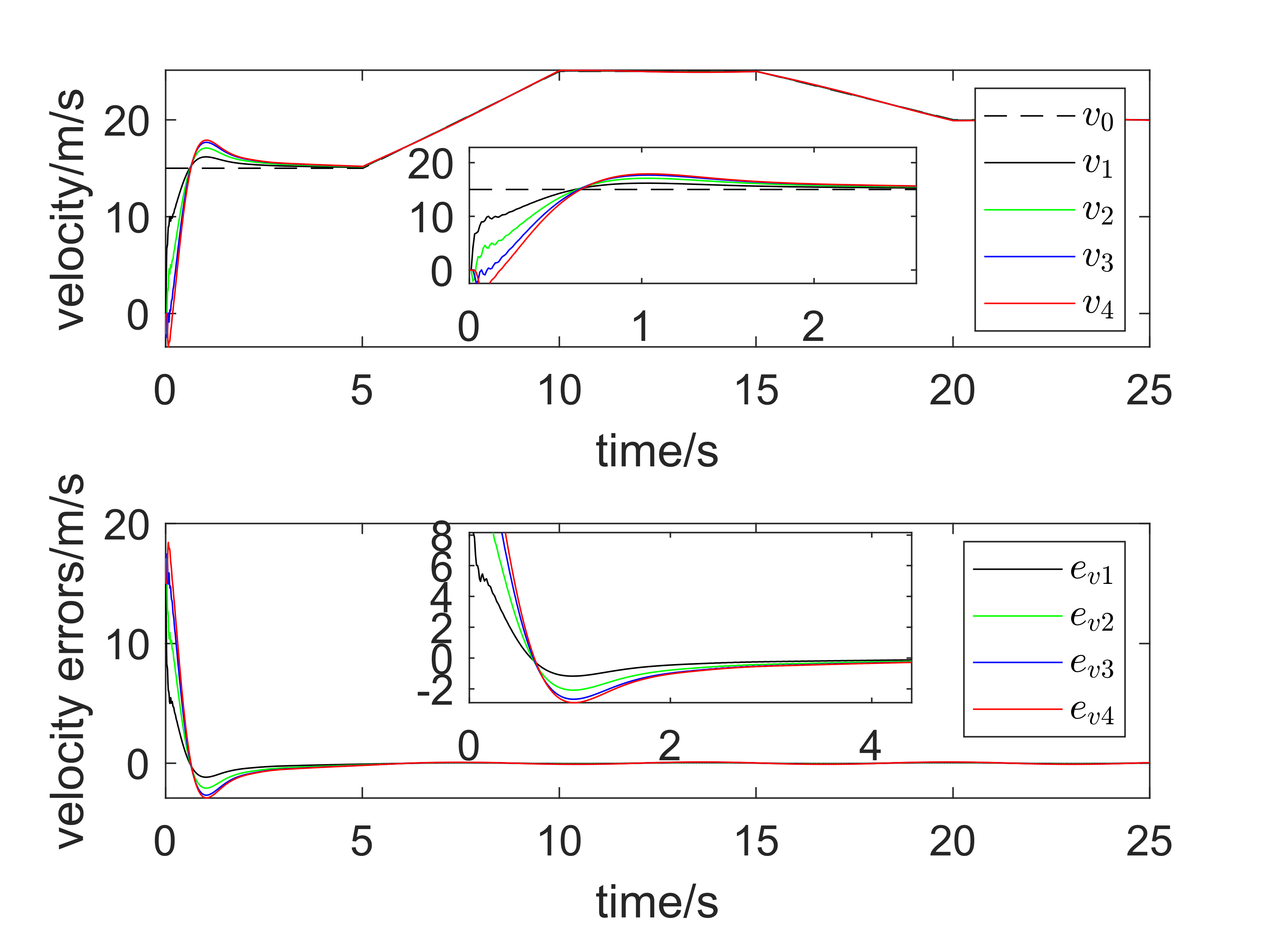}
    \caption{Results observed with our proposed controller under sinusoidal disturbances and bidirectional type information flow topology: (Top) Vehicle velocities and (Bottom) Velocity tracking errors.}
    \label{fig:velocity_velocity_errors_sin_s_VSS}
\end{figure}

\begin{figure}[!t]
    \centering
    \includegraphics[width=0.8\linewidth]{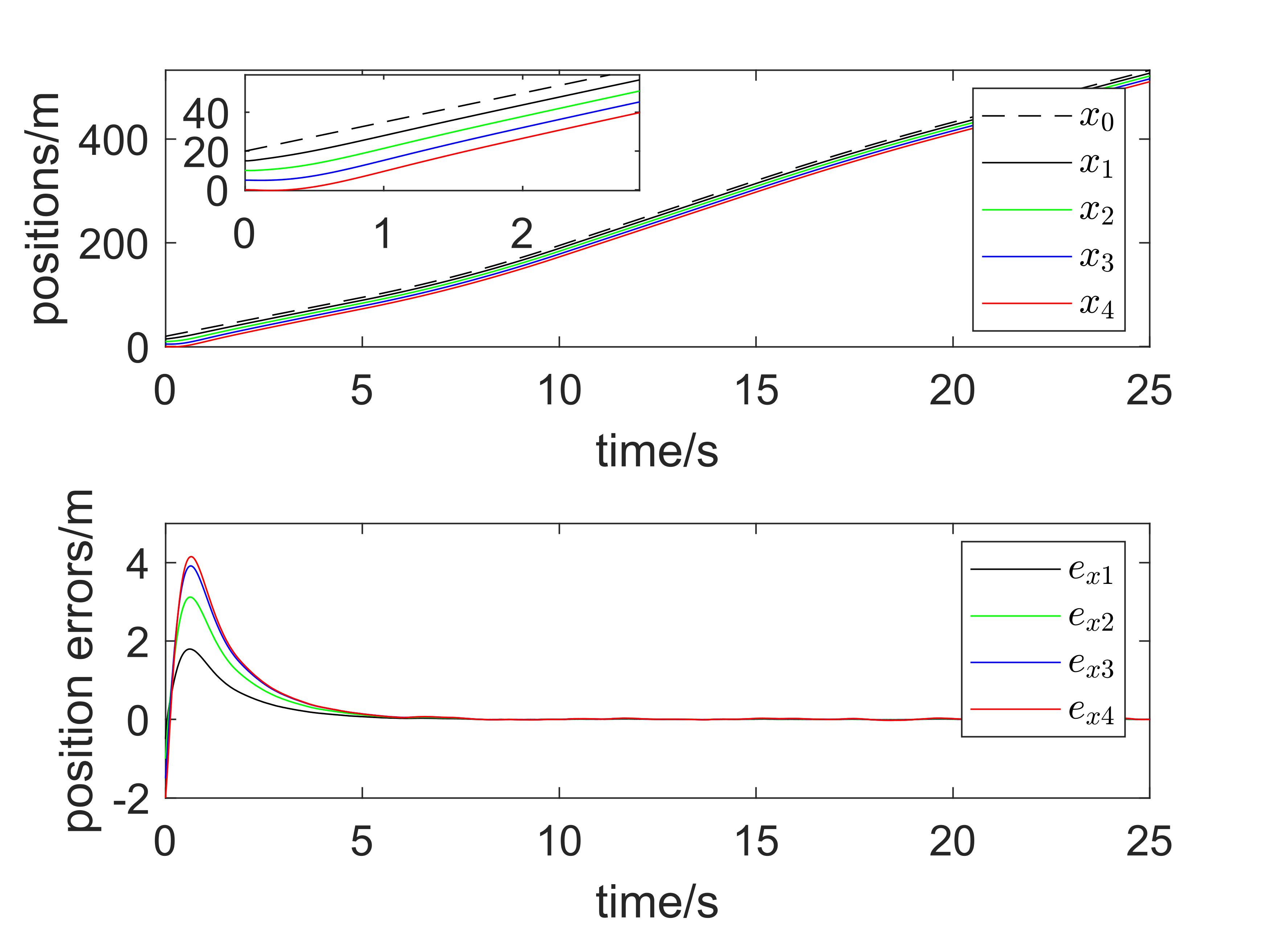}
    \caption{Results observed with our proposed controller under Gaussian random noise and bidirectional type information flow topology: (Top) Vehicle positions and (Bottom) Position tracking errors.}
    \label{fig:position_position_errors_aper_s_VSS}
\end{figure}
\begin{figure}[!t]
    \centering
    \includegraphics[width=0.8\linewidth]{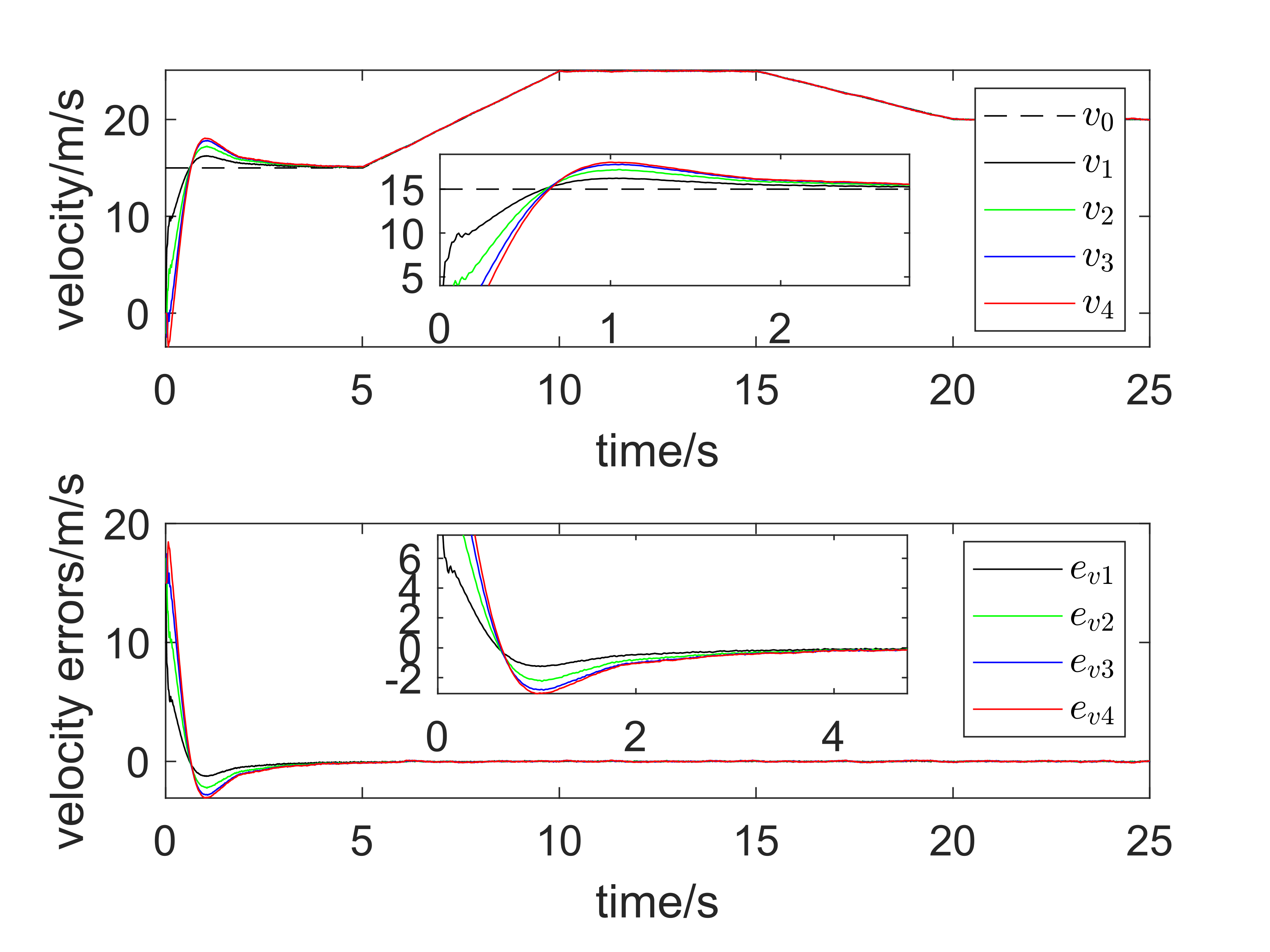}
    \caption{Results observed with our proposed controller under Gaussian random noise and bidirectional type information flow topology: (Top) Vehicle velocities and (Bottom) Velocity tracking errors.}
    \label{fig:velocity_velocity_errors_aper_s_VSS}
\end{figure}

\begin{figure}[!t]
    \centering
    \includegraphics[width=0.8\linewidth]{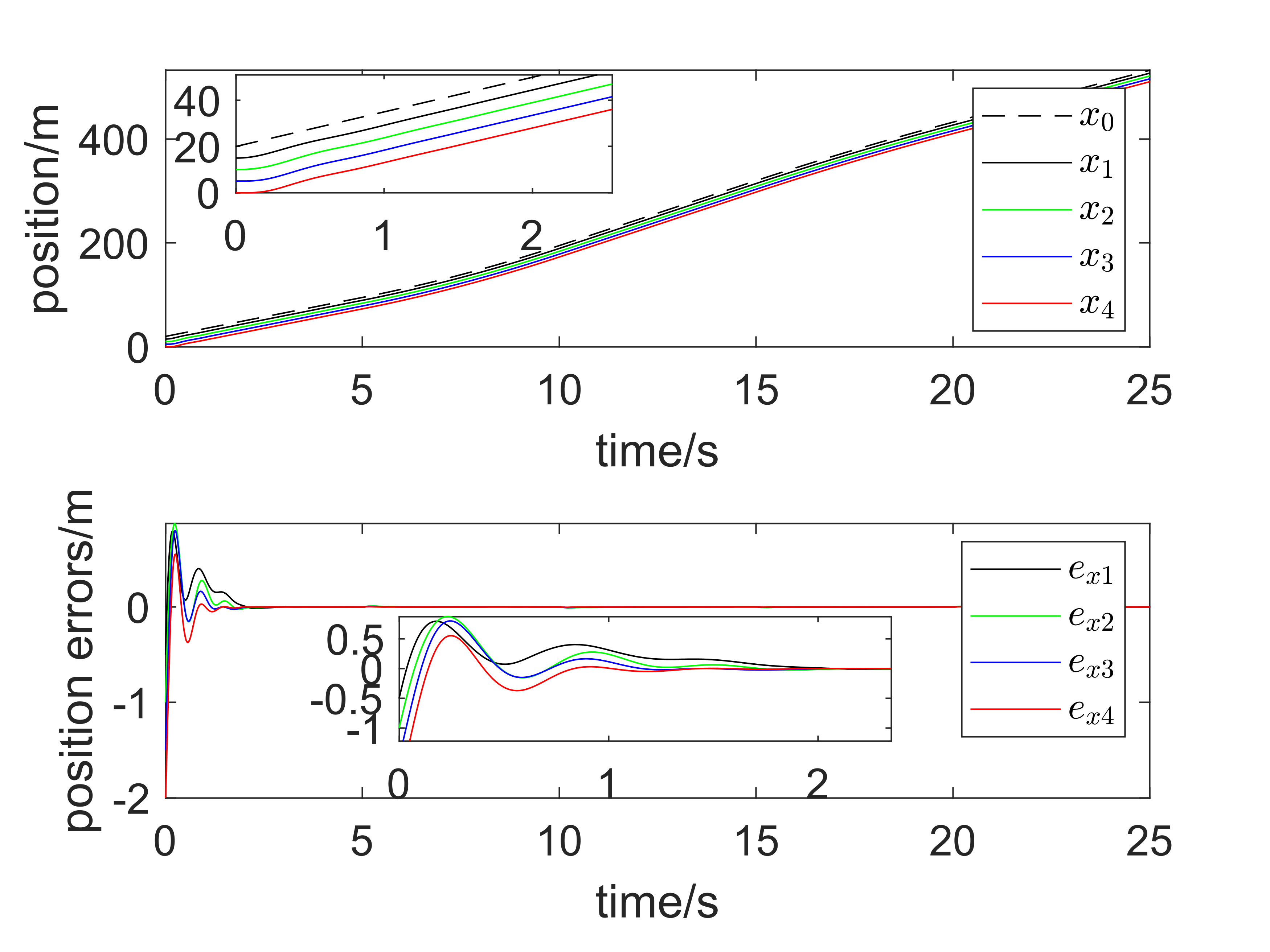}
    \caption{Results observed with the controller proposed in \cite{wang2020finitetime} under sinusoidal disturbances and bidirectional-leader type information flow topology: (Top) Vehicle positions and (Bottom) Position tracking errors.}
    \label{fig:position_position_errors_sin_b_paperB}
\end{figure}
\begin{figure}[!t]
    \centering
    \includegraphics[width=0.8\linewidth]{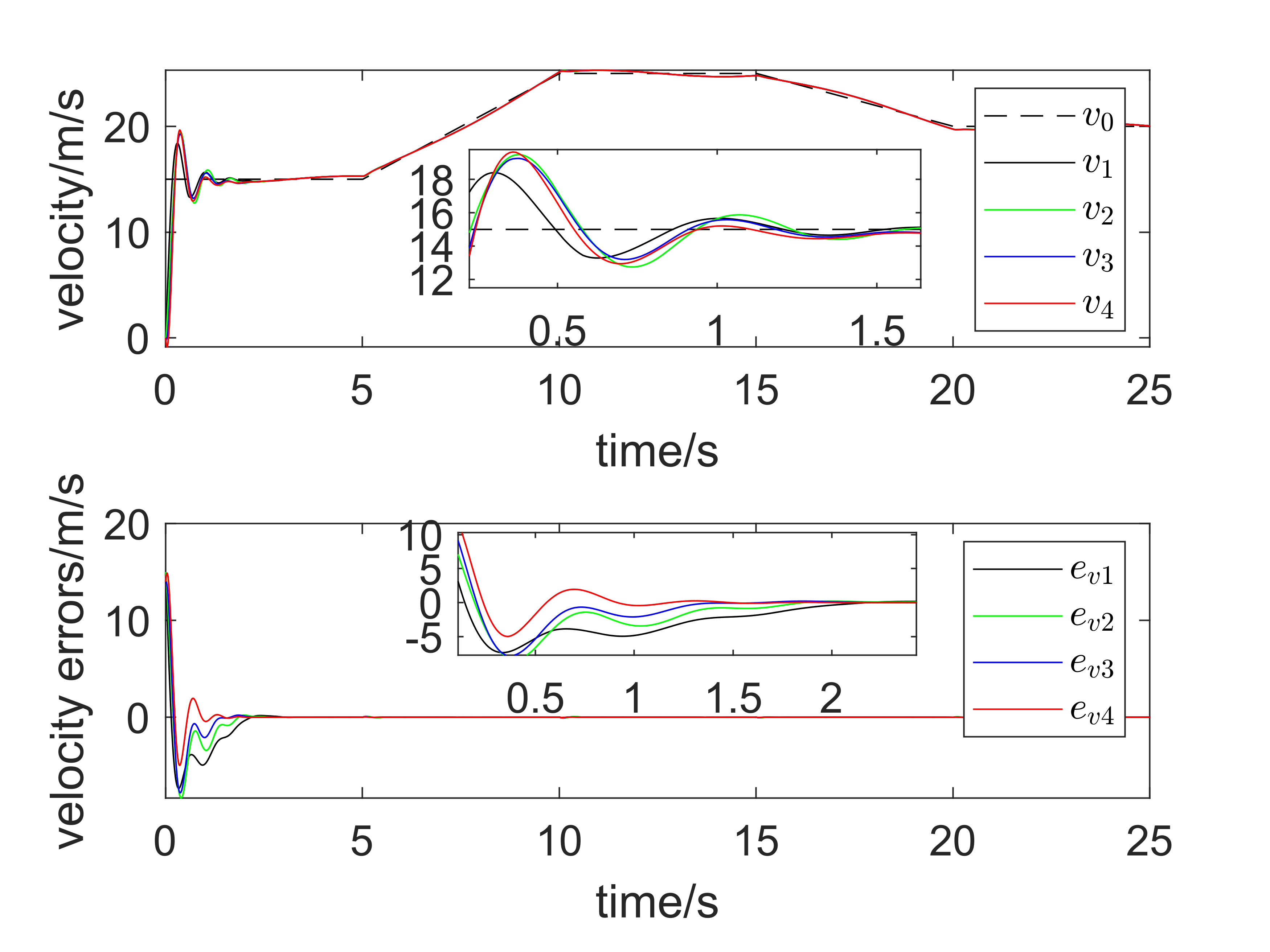}
    \caption{Results observed with the controller proposed in \cite{wang2020finitetime} under sinusoidal disturbances and bidirectional-leader type information flow topology: (Top) Vehicle velocities and (Bottom) Velocity tracking errors.}
    \label{fig:velocity_velocity_errors_sin_b_paperB}
\end{figure}

\begin{figure}[!t]
    \centering
    \includegraphics[width=0.8\linewidth]{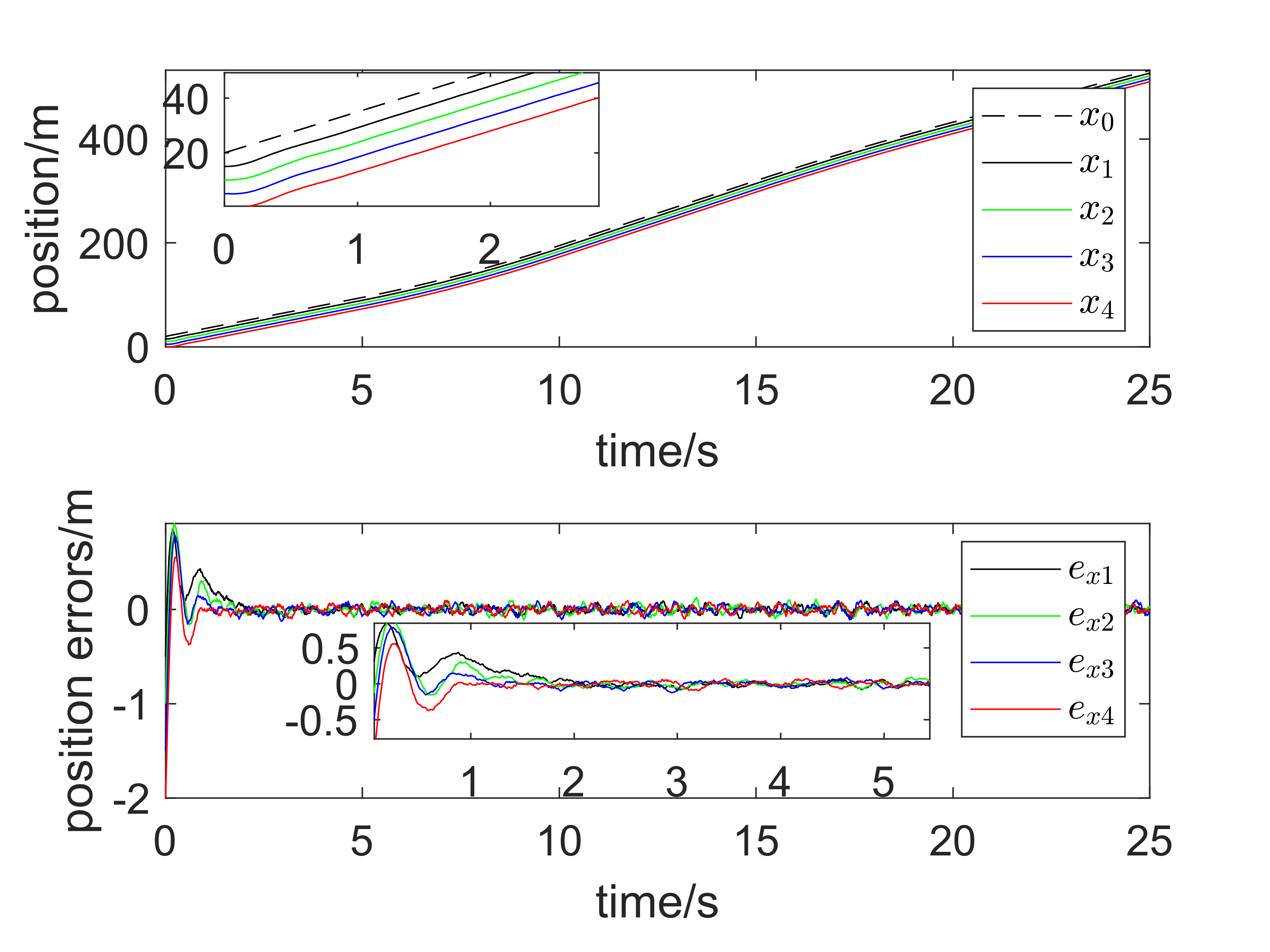}
    \caption{Results observed with the controller proposed in \cite{wang2020finitetime} under Gaussian random noise and bidirectional-leader type information flow topology: (Top) Vehicle positions and (Bottom) Position tracking errors.}
    \label{fig:position_position_errors_aper_b_paperB}
\end{figure}
\begin{figure}[!t]
    \centering
    \includegraphics[width=0.8\linewidth]{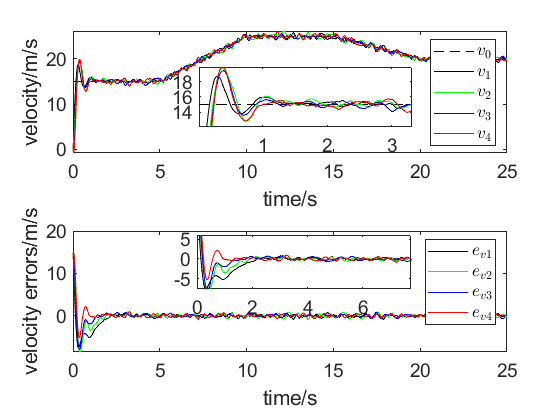}
    \caption{Results observed with the controller proposed in \cite{wang2020finitetime} under Gaussian random noise and bidirectional-leader type information flow topology: (Top) Vehicle velocities and (Bottom) Velocity tracking errors.}
    \label{fig:velocity_velocity_errors_aper_b_paperB}
\end{figure}

\begin{figure}[!t]
    \centering
    \includegraphics[width=0.8\linewidth]{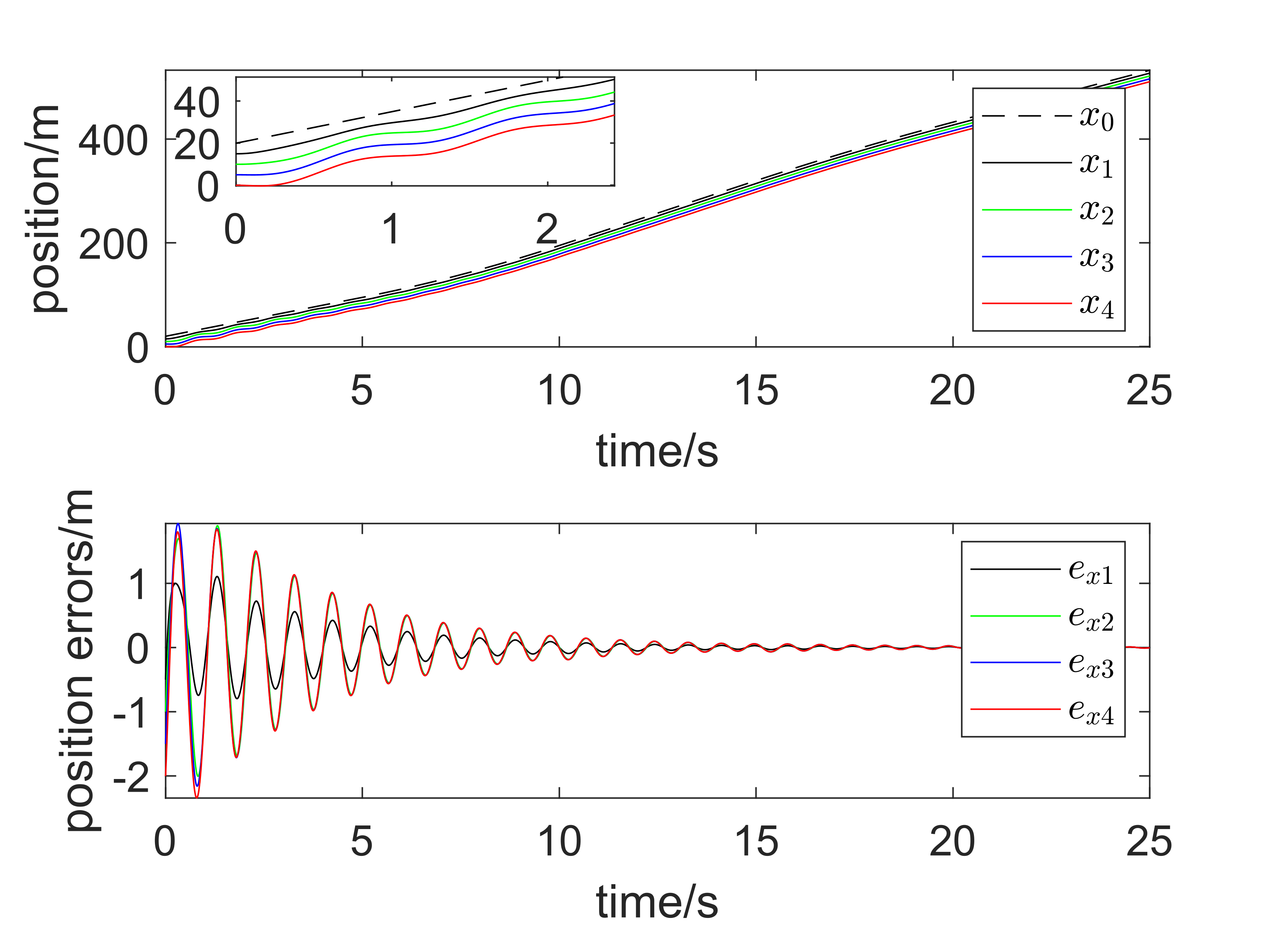}
    \caption{Results observed with the controller proposed in \cite{wang2020finitetime} under sinusoidal disturbances and bidirectional type information flow topology: (Top) Vehicle positions and (Bottom) Position tracking errors.}
    \label{fig:position_position_errors_sin_s_paperB}
\end{figure}
\begin{figure}[!t]
    \centering
    \includegraphics[width=0.8\linewidth]{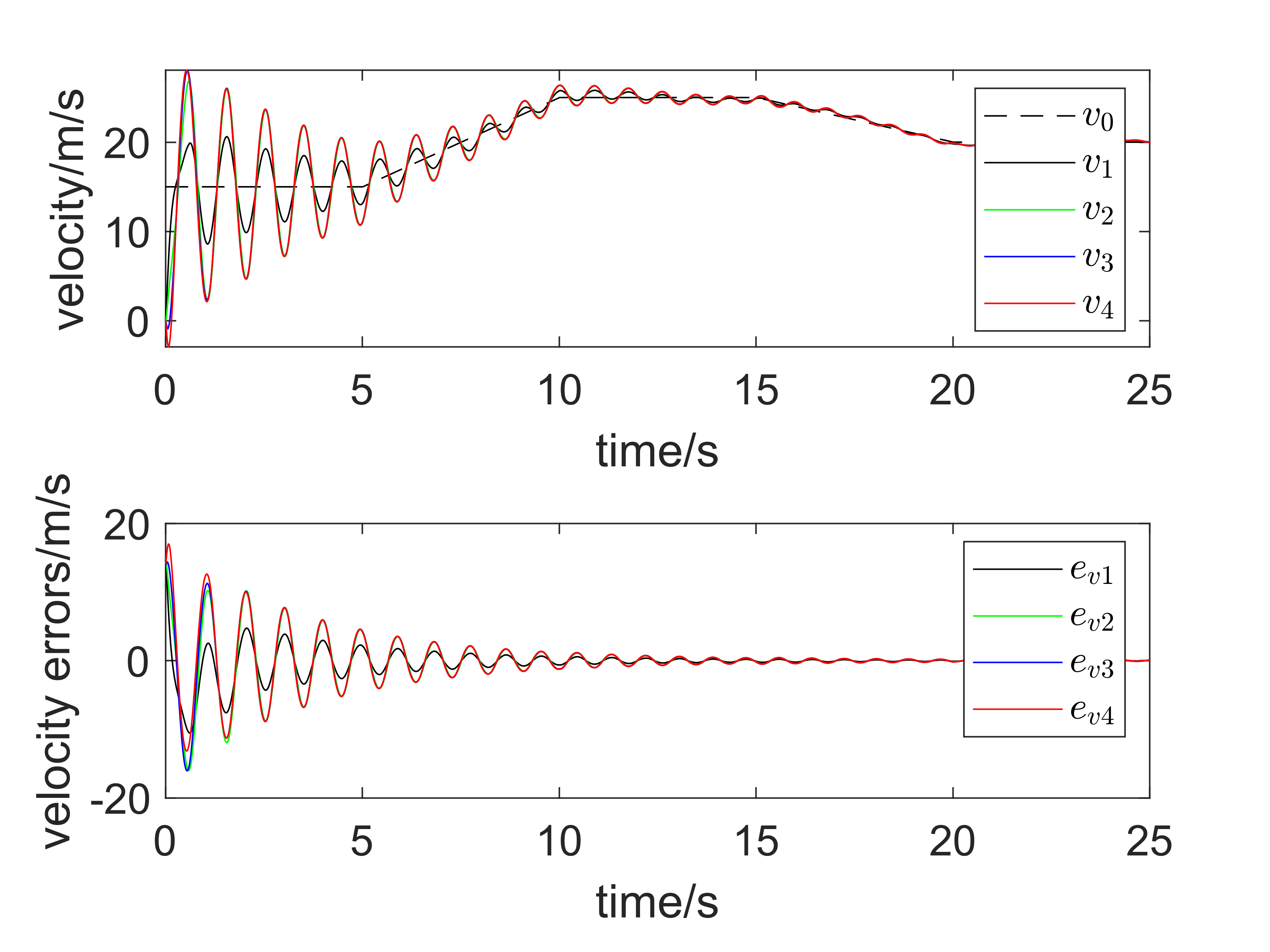}
    \caption{Results observed with the controller proposed in \cite{wang2020finitetime} under sinusoidal disturbances and bidirectional type information flow topology: (Top) Vehicle velocities and (Bottom) Velocity tracking errors.}
    \label{fig:velocity_velocity_errors_sin_s_paperB}
\end{figure}

\begin{figure}[!t]
    \centering
    \includegraphics[width=0.8\linewidth]{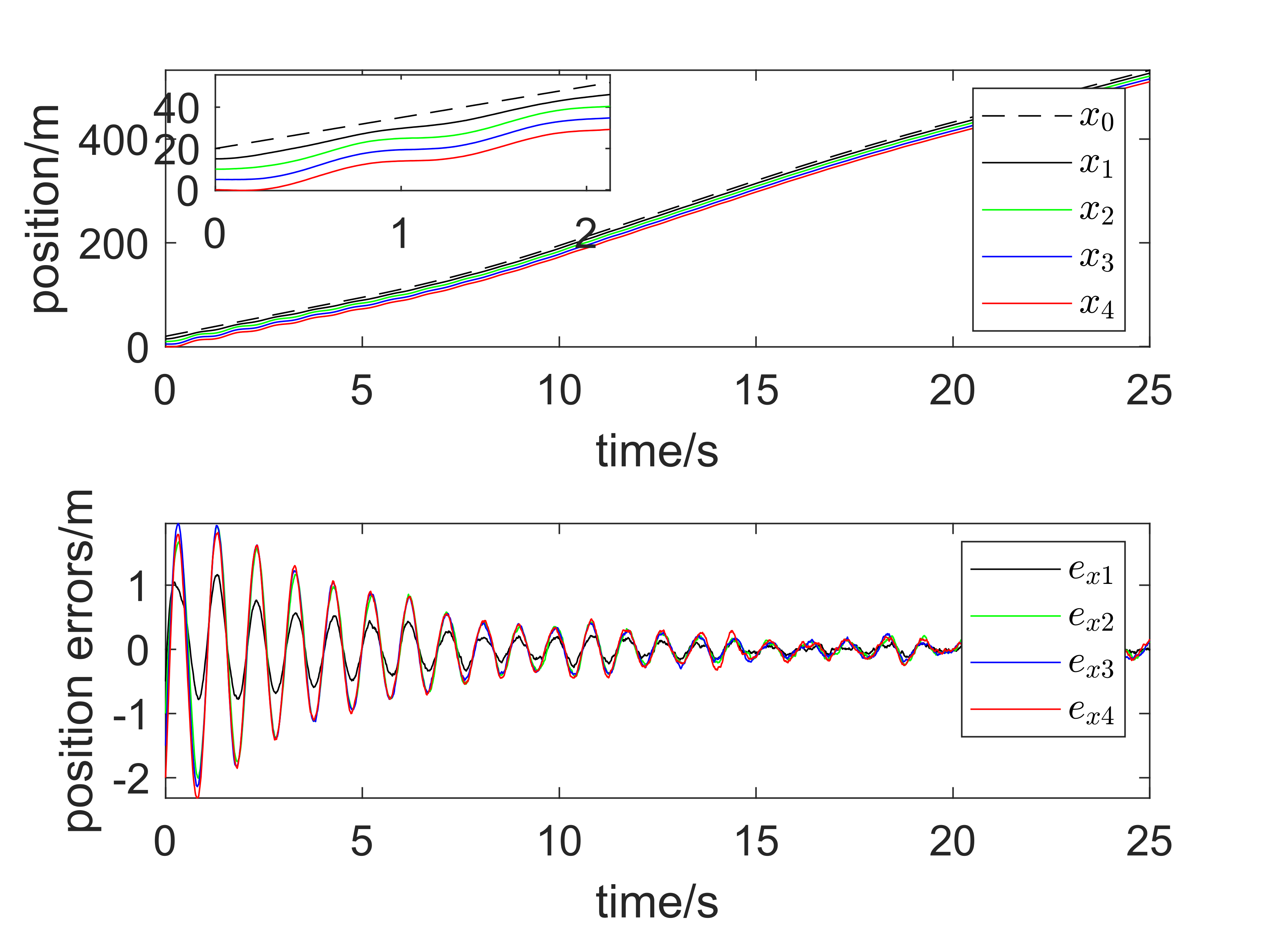}
    \caption{Results observed with the controller proposed in \cite{wang2020finitetime} under Gaussian random noise and bidirectional type information flow topology: (Top) Vehicle positions and (Bottom) Position tracking errors.}
    \label{fig:position_position_errors_aper_s_paperB}
\end{figure}
\begin{figure}[!t]
    \centering
    \includegraphics[width=0.8\linewidth]{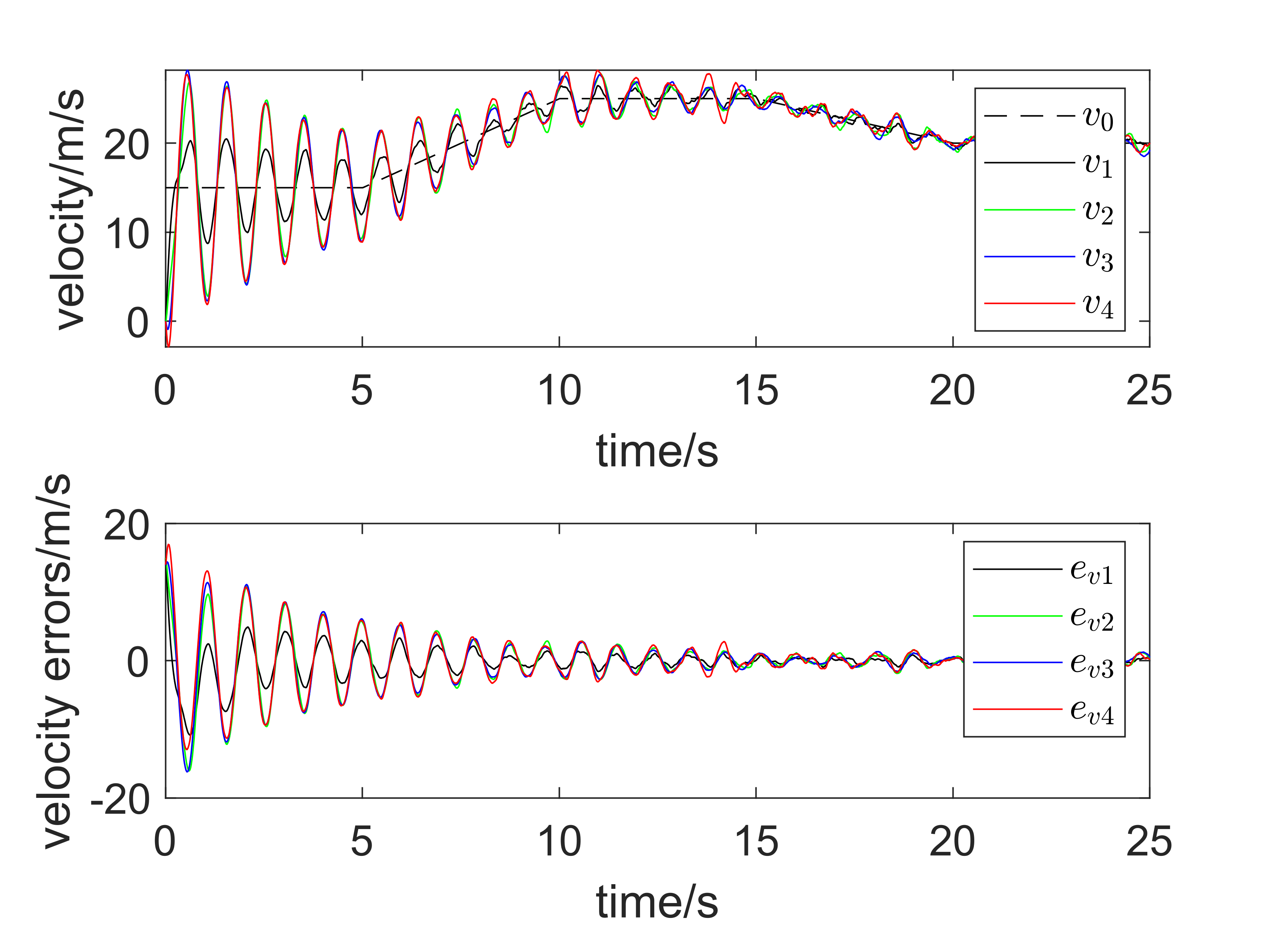}
    \caption{Results observed with the controller proposed in \cite{wang2020finitetime} under Gaussian random noise and bidirectional type information flow topology: (Top) Vehicle velocities and (Bottom) Velocity tracking errors.}
    \label{fig:velocity_velocity_errors_aper_s_paperB}
\end{figure}

\begin{table*}[!t]
\caption{Comparison of the RMS values of the tracking errors for the proposed method and the method proposed in \cite{wang2020finitetime}.}
\centering
\label{tab:comparison_RMS}
\begin{tabular}{|c|cccccccc|cccccccc|}
\hline
Dist. type &
  \multicolumn{8}{c|}{Sinusoidal disturbance} &
  \multicolumn{8}{c|}{Gaussain random noise} \\ \hline
Method &
  \multicolumn{4}{c|}{Our method} &
  \multicolumn{4}{c|}{Method in {\cite{wang2020finitetime}}} &
  \multicolumn{4}{c|}{Our method} &
  \multicolumn{4}{c|}{Method in {\cite{wang2020finitetime}}} \\ \hline
Top. type &
  \multicolumn{2}{c|}{bl-type} &
  \multicolumn{2}{c|}{b-type} &
  \multicolumn{2}{c|}{bl-type} &
  \multicolumn{2}{c|}{b-type} &
  \multicolumn{2}{c|}{bl-type} &
  \multicolumn{2}{c|}{b-type} &
  \multicolumn{2}{c|}{bl-type} &
  \multicolumn{2}{c|}{b-type} \\ \hline
pos./vel. &
  \multicolumn{1}{l|}{pos.} &
  \multicolumn{1}{l|}{vel.} &
  \multicolumn{1}{l|}{pos.} &
  \multicolumn{1}{l|}{vel.} &
  \multicolumn{1}{l|}{pos.} &
  \multicolumn{1}{l|}{vel.} &
  \multicolumn{1}{l|}{pos.} &
  \multicolumn{1}{l|}{vel.} &
  \multicolumn{1}{l|}{pos.} &
  \multicolumn{1}{l|}{vel.} &
  \multicolumn{1}{l|}{pos.} &
  \multicolumn{1}{l|}{vel.} &
  \multicolumn{1}{l|}{pos.} &
  \multicolumn{1}{l|}{vel.} &
  \multicolumn{1}{l|}{pos.} &
  \multicolumn{1}{l|}{vel.} \\ \hline
Veh. 1 &
  \multicolumn{1}{c|}{0.12} &
  \multicolumn{1}{c|}{\textbf{0.76}} &
  \multicolumn{1}{c|}{0.38} &
  \multicolumn{1}{c|}{\textbf{0.79}} &
  \multicolumn{1}{c|}{\textbf{0.09}} &
  \multicolumn{1}{c|}{1.28} &
  \multicolumn{1}{c|}{\textbf{0.24}} &
  1.95 &
  \multicolumn{1}{c|}{\textbf{0.12}} &
  \multicolumn{1}{c|}{\textbf{0.76}} &
  \multicolumn{1}{c|}{0.37} &
  \multicolumn{1}{c|}{\textbf{0.79}} &
  \multicolumn{1}{c|}{\textbf{0.10}} &
  \multicolumn{1}{c|}{1.35} &
  \multicolumn{1}{c|}{\textbf{0.27}} &
  2.10 \\ \hline
Veh. 2 &
  \multicolumn{1}{c|}{\textbf{0.07}} &
  \multicolumn{1}{c|}{\textbf{0.77}} &
  \multicolumn{1}{c|}{0.65} &
  \multicolumn{1}{c|}{\textbf{1.31}} &
  \multicolumn{1}{c|}{0.09} &
  \multicolumn{1}{c|}{1.25} &
  \multicolumn{1}{c|}{\textbf{0.47}} &
  3.23 &
  \multicolumn{1}{c|}{\textbf{0.07}} &
  \multicolumn{1}{c|}{\textbf{0.78}} &
  \multicolumn{1}{c|}{0.64} &
  \multicolumn{1}{c|}{\textbf{1.32}} &
  \multicolumn{1}{c|}{0.10} &
  \multicolumn{1}{c|}{1.32} &
  \multicolumn{1}{c|}{\textbf{0.52}} &
  3.53 \\ \hline
Veh. 3 &
  \multicolumn{1}{c|}{\textbf{0.06}} &
  \multicolumn{1}{c|}{\textbf{0.80}} &
  \multicolumn{1}{c|}{0.81} &
  \multicolumn{1}{c|}{\textbf{1.67}} &
  \multicolumn{1}{c|}{0.09} &
  \multicolumn{1}{c|}{1.19} &
  \multicolumn{1}{c|}{\textbf{0.49}} &
  3.32 &
  \multicolumn{1}{c|}{\textbf{0.06}} &
  \multicolumn{1}{c|}{\textbf{0.80}} &
  \multicolumn{1}{c|}{0.79} &
  \multicolumn{1}{c|}{\textbf{1.68}} &
  \multicolumn{1}{c|}{0.10} &
  \multicolumn{1}{c|}{1.23} &
  \multicolumn{1}{c|}{\textbf{0.54}} &
  3.67 \\ \hline
Veh. 4 &
  \multicolumn{1}{c|}{0.10} &
  \multicolumn{1}{c|}{\textbf{0.83}} &
  \multicolumn{1}{c|}{0.85} &
  \multicolumn{1}{c|}{\textbf{1.87}} &
  \multicolumn{1}{c|}{0.10} &
  \multicolumn{1}{c|}{1.15} &
  \multicolumn{1}{c|}{\textbf{0.49}} &
  3.33 &
  \multicolumn{1}{c|}{\textbf{0.10}} &
  \multicolumn{1}{c|}{\textbf{0.83}} &
  \multicolumn{1}{c|}{0.83} &
  \multicolumn{1}{c|}{\textbf{1.88}} &
  \multicolumn{1}{c|}{0.11} &
  \multicolumn{1}{c|}{1.20} &
  \multicolumn{1}{c|}{\textbf{0.55}} &
  3.69 \\ \hline
\end{tabular}
\end{table*}

\section{Conclusion}\label{sec:conclusion}

In this paper, we proposed a novel notion that we named \emph{Vector String Lyapunov function (VSLF)}, whose existence implies $l_2$ weak string stability. Then, based on this concept, we designed a distributed adaptive backstepping controller for platooning control, which guarantees both compositional and distributed properties. Furthermore, the internal and string stability are formally proved using a VSLF. Simulation results show that our proposed controller is robust with respect to general types of disturbances and it performs better than the existing work \cite{wang2020finitetime} in terms of position and velocity tracking. Future work aims to consider the distributed estimation of the centralized terms $|\mathbf{e}_2|$, $|\mathcal{H}\mathbf{e}_3|$ required in the adaptive laws and some measurement and network induced effects, including sampling-data, delays and noise.


\bibliographystyle{IEEEtran}
\bibliography{references}

\end{document}